\newtheorem{example}{\textbf{Example}}{\itshape}{\rmfamily}
\newtheorem{theorem}{\textbf{Theorem}}{\itshape}{\rmfamily}
\newtheorem{definition}{\textbf{Definition}}{\itshape}{\rmfamily}
 \newtheorem{proof}{\textbf{Proof}}{\itshape}{\rmfamily}
  \newtheorem{lemma}{\textbf{Lemma}}{\itshape}{\rmfamily}
\definecolor{mygreen}{RGB}{0,159,57}
\definecolor{myyellow}{RGB}{255,200,0}
\definecolor{myorange}{RGB}{255,140,0}
\definecolor{myblue}{RGB}{0,50,255}
\colorlet{myred}{red!90}
\setlist{nolistsep} 
\newcommand{\sos}[1]{$\mathit{#1^{sos}}$}
\newcommand{\whilelang}{\mathcal{W}}
\newcommand{\ifThenElseStatement}{\text{if-then-else}}
\newcommand{\whileStatement}{\text{while}}
\newcommand{\while}[1]{\lstinline[mathescape]!#1!}
\newcommand{\pv}[1]{\texttt{#1}}
\newcommand{\pvi}[1]{$\mathtt{_{#1}}$}
\newcommand{\trans}[1]{[#1]}
\newcommand{\limp}{\rightarrow}
\newcommand{\Land}{\bigwedge}
\newcommand{\union}{\cup}
\newcommand*{\qed}{\null\nobreak\hfill\ensuremath{\square}}%
\newcommand{\eql}{{\,\simeq\,}}
\newcommand{\neql}{\not\simeq}
\newcommand{\zero}{{\tt 0}}
\newcommand{\suc}{{\tt suc}}
\newcommand{\pred}{{\tt pred}}
\newcommand{\Nat}{\mathbb{N}}
\newcommand{\Int}{\mathbb{I}}
\newcommand{\natsort}{\mathbb{N}}
\newcommand{\intsort}{\mathbb{I}}
\newcommand{\natsig}{S_{\Nat}}
\newcommand{\intsig}{S_{\Int}}
\newcommand{\tracelogic}{\mathcal{L}}
\newcommand{\tpsort}{\mathbb{L}}
\newcommand{\tpsig}{S_{Tp}}
\newcommand{\lastsig}{S_n}
\newcommand{\pvsig}{S_V}
\definecolor{keywords} {rgb}{0.34,0.55,0.33}
\definecolor{comment}  {rgb}{0.27, 0.48, 0.34}
\lstdefinelanguage{while} {
    morekeywords = {if, else, while, const, Int, func, skip},
    morecomment=[l]{//},
    morestring=[b]",
}
\lstdefinestyle{while} {
    language            = while,
    basicstyle          = \small\ttfamily,
    commentstyle        = \color{comment},
    keywordstyle        = \bfseries\color{keywords},
    tabsize             = 2,
    captionpos          = b,
    keepspaces          = true,
    breaklines          = true,
    framesep            = 3.5mm,
    framexleftmargin    = 2.5mm,
    numberstyle         = \ttfamily\small
}
\newcommand{\rapid}{\textsc{Rapid}}
\newcommand{\vampire}{\textsc{Vampire}}
\newcommand{\smtlib}{\textsc{smt-lib}}
\newcommand{\z}{\textsc{Z3}}
\newcommand{\cvc}{\textsc{CVC4}}
\newcommand{\spacer}{\textsc{Spacer}}
\newcommand{\freqhorn}{\textsc{FreqHorn}}
\newcommand{\seahorn}{\textsc{SeaHorn}}
\definecolor{orcidlogocol}{HTML}{A6CE39}
\tikzset{
	orcidlogo/.pic={
		\fill[orcidlogocol] svg{M256,128c0,70.7-57.3,128-128,128C57.3,256,0,198.7,0,128C0,57.3,57.3,0,128,0C198.7,0,256,57.3,256,128z};
		\fill[white] svg{M86.3,186.2H70.9V79.1h15.4v48.4V186.2z}
		svg{M108.9,79.1h41.6c39.6,0,57,28.3,57,53.6c0,27.5-21.5,53.6-56.8,53.6h-41.8V79.1z M124.3,172.4h24.5c34.9,0,42.9-26.5,42.9-39.7c0-21.5-13.7-39.7-43.7-39.7h-23.7V172.4z}
		svg{M88.7,56.8c0,5.5-4.5,10.1-10.1,10.1c-5.6,0-10.1-4.6-10.1-10.1c0-5.6,4.5-10.1,10.1-10.1C84.2,46.7,88.7,51.3,88.7,56.8z};
	}
}
\newcommand\orcidicon[1]{\href{https://orcid.org/#1}{\mbox{\scalerel*{
				\begin{tikzpicture}[yscale=-1,transform shape]
				\pic{orcidlogo};
				\end{tikzpicture}
			}{|}}}}
\title{
  % Trace Logic and Automated Loop Splitting
  Trace Logic for Inductive Loop Reasoning
}
\author{
	\IEEEauthorblockN{
		Pamina Georgiou\orcidicon{0000-0003-4856-4596},
		Bernhard Gleiss\orcidicon{0000-0002-2592-124X},  
		Laura Kov{\'a}cs\orcidicon{0000-0002-8299-2714}
	}
	\IEEEauthorblockA{TU Wien, Austria}
}
\begin{document}

\maketitle

\begin{abstract}
  We propose trace logic, an instance of many-sorted first-order logic, to automate the partial correctness verification of programs containing loops. Trace logic generalizes semantics of program locations and captures loop semantics by encoding properties at arbitrary timepoints and loop iterations. We guide and automate inductive loop reasoning in trace logic by using generic trace lemmas capturing inductive loop invariants.
 Our work is implemented in the \rapid{} framework, by extending and integrating superposition-based first-order reasoning within \rapid. We successfully used \rapid{} to 
 prove correctness of many programs whose functional behavior are best summarized in the first-order theories of linear integer arithmetic, arrays and inductive data types. 
\end{abstract}

\textit{Related Version -- A compact, peer-reviewed version of this paper is published in the conference proceedings of Formal Methods in Computer-Aided Design (FMCAD) 2020.}

\section{Introduction}

One of the main challenges in automating software verification comes with handling inductive reasoning over programs containing loops. Until recently,
 automated reasoning in formal verification was the primary
 domain of satisfiability modulo theory (SMT) solvers~\cite{Z3,CVC4},
 yielding powerful advancements for inferring and proving loop properties with linear arithmetic and limited use of quantifiers, see e.g.~\cite{karbyshev2015property,gurfinkel2018quantifiers,fedyukovich2019quantified}. 
Formal verification however also requires
reasoning about unbounded data types, such
as arrays, and inductively defined data types. 
Specifying, for example as shown in Figure~\ref{fig:running},
that
every element in the array $\pv{b}$
is initialized by a non-negative array element of $\pv{a}$
requires reasoning with quantifiers and can be best expressed in many-sorted extensions of 
first-order logic. 
Yet, the recent progress in automation for
quantified reasoning in first-order theorem proving has not yet been fully integrated in formal verification.
In this paper we address such a use of first-order reasoning and propose trace logic $\tracelogic$, an instance of many-sorted first-order logic,
to automate the partial correctness verification of program loops, by expressing program semantics in $\tracelogic$, and use $\tracelogic$ in combination with superposition-based first-order theorem proving.

\paragraph*{\bf Contributions} 
In our previous work~\cite{barthe2019verifying}, an initial version of trace logic $\tracelogic$ was introduced to formalize and prove relational properties.
In this paper, we go beyond~\cite{barthe2019verifying} and turn trace logic $\tracelogic$ into an efficient approach to loop (safety) verification.
We propose trace logic $\tracelogic$ as a unifying framework to reason about both relational and safety properties expressed in full first-order logic with theories. We bring the following contributions. 

(i) We generalize the semantics of program locations by treating them as functions of execution timepoints. 
In essence, unlike other works
~\cite{bjorner2015horn,kobayashi2020fold,chakraborty2020verifying,ish2020putting}, we formalize program properties at arbitrary timepoints of locations. 

(ii) Thanks to this generalization, we provide a non-recursive axiomatization of program semantics in trace logic $\tracelogic$ and prove completeness of our axiomatization with respect to Hoare logic.
Our semantics in trace logic $\tracelogic$ supports arbitrary quantification over loop iterations (Section~\ref{sec:axiomatic-semantics}).

(iii) We guide and automate inductive loop reasoning in trace logic $\tracelogic$, by using generic trace lemmas capturing inductive loop invariants (Section~\ref{sec:verification}). We prove soundness of each trace lemma we introduce. 

(iv) We bring first-order theorem proving into the landscape of formal verification, by extending recent results in  superposition-based reasoning~\cite{gleiss2020subsumption,gleiss2020layered,kovacs2017coming} with support for trace logic properties, complementing SMT-based verification methods in the area (Section~\ref{sec:verification}).
As logical consequences of our trace lemmas are also loop invariants, superposition-based reasoning in trace logic $\tracelogic$ enables to automatically find loop invariants that are needed for proving safety assertions of program loops.

(v) We implemented our approach in the \rapid\ framework and combined \rapid\ with new extensions of the first-order theorem prover \vampire. 
We successfully evaluated  our work on more than 100 benchmarks taken from the SV-Comp repository~\cite{beyer2019automatic},  mainly consisting of safety verification challenges over programs containing arrays of arbitrary length and integers (Section~\ref{sec:implementation}). Our experiments show that \rapid{} automatically proves safety of many examples that, to the best of our knowledge, cannot be handled by other methods.

\section{Running Example}\label{sec:running}

\begin{figure} 
	\begin{lstlisting}[mathescape]
	func main() {
		const Int[] a;
		
		Int[] b;
		Int i = 0;
		Int j = 0;
		while (i < a.length) {
			if (a[i] $\geq$ 0) {
				b[j] = a[i]; 
				j = j + 1: 
			}
			i = i + 1;
		}
	}
	assert ($\forall$k$_\Int. \exists$l$_\Int.  ((0 \leq$ k $<$j $\wedge$ a.length $\geq 0$)  $\qquad\qquad\rightarrow$ b(k) = a(l)))
	\end{lstlisting}
		\caption{Program copying positive elements from array \pv{a} to \pv{b}.}\vspace*{-.5em}
		\label{fig:running}
\end{figure}

We illustrate and motivate our work with Figure \ref{fig:running}.
This program iterates over a constant integer array \pv{a} of arbitrary length and copies positive values into a new array \pv{b}. 
We are interested in proving the safety assertion given at line~15: given
that the length $\pv{a.length}$ of \pv{a} is not negative, every
element in \pv{b} is an element from \pv{a}. Expressing such a
property requires 
alternations of quantifiers in the first-order theories of
linear integer arithmetic and arrays, as formalized in line~15. 
We write $k_{\Int}$ and
$l_{\Int}$ to specify  that $k,l$ are of sort integer $\intsort$. 

While the safety assertion of line~15 holds, proving correctness of
Figure~\ref{fig:running} is challenging for most state-of-the-art
approaches, such as
e.g.~\cite{gurfinkel2015seahorn,karbyshev2015property,gurfinkel2018quantifiers,fedyukovich2019quantified}.
The reason is that proving safety of Figure~\ref{fig:running}  needs
inductive invariants with existential/alternating quantification and
involves inductive reasoning over arbitrarily bounded loop
iterations/timepoints. In this paper we address these challenges as follows.

(i) We extend the
semantics of program locations to des\-cribe locations parameterized by
timepoints, allowing us to express values of program variables at
arbitrary program locations within arbitrary loop iterations.
We write for example $i(l_{12}(it)))$ to denote the value of program variable $\pv{i}$ at
location $l_{12}$ in a loop iteration $it$, where the location $l_{12}$
corresponds to the program line~12. We reserve the constant $end$ for
specifying the last program location $l_{15}$, that is line 15,
corresponding to a terminating program execution of
Figure~\ref{fig:running}. We then write 
$b(end, k)$ to capture the value of array $\pv{b}$ at timepoint $end$ and
position $k$. For simplicity, as $\pv{a}$ is a constant array, we
simply write $a(k)$ instead $a(end, k)$.

(ii) Exploiting the semantics of program locations, we formalize 
the safety assertion of line~15 in trace logic $\tracelogic$ as
follows: 
\begin{equation}\label{ex:running:prop2}
  \begin{array}{l}
   \hskip-3.5em \forall k_\Int. \exists l_\Int. \big( (0 \leq k < j(end) \wedge a.length \geq 0) \\\rightarrow
    b(end, k)  \eql a(l)\big)
    \end{array}
\end{equation}

(iii) We express the 
semantics of Figure~\ref{fig:running} as a set $\mathcal{S}$ of first-order formulas
in trace logic $\tracelogic$, encoding values and dependencies
among program variables at arbitrary loop iterations.  To this end, we extend $\mathcal{S}$
with so-called trace lemmas, to automate inductive reasoning in
trace logic $\tracelogic$. One such trace lemma exploits the semantics
of updates to $\pv{j}$, allowing us to infer that {\it every}  value of
$j$ between $0$ to $j(end$), and thus each position at which the array
\pv{b} has been updated, is given by {\it some} loop
iteration. Moreover, updates to $\pv{j}$ happen at different loop
iterations and thus a position $\pv{j}$ at which $\pv{b}$ is updated is
visited uniquely throughout Figure~\ref{fig:running}.

(iv) We finally establish validity of~\eqref{ex:running:prop2}, by
deriving~\eqref{ex:running:prop2} to be a logical
consequence of $\mathcal{S}$.

\section{Preliminaries}
% This section fixes our notations and terminology.
We assume familiarity with standard
first-order logic with equality and sorts. We write 
$\eql$  for equality and  $x_{S}$ to denote that a
logical variable $x$ has sort $S$. 
We denote by $\intsort$ the set of
integer numbers and by $\mathbb{B}$ the boolean sort. The term algebra of natural numbers is denoted by 
$\natsort$, with constructors $\zero$ and successor $\suc$.
We also consider the symbols $\pred$ and $\leq$ as part of the
signature of $\natsort$, interpreted respectively as the predecessor
function and less-than-equal relation.

Let $P$ be a first-order formula  with one free
variable $x$ of sort $\natsort$. We recall the standard (step-wise) 
induction schema for natural numbers as being 

\begin{equation}\label{eq:sw:ind}
\Big(P(\zero) \land \forall x'_\Nat. \big(P(x') \limp P(\suc(x'))\big)
\Big) \limp \forall x_\Nat. P(x)
\end{equation}

In our work, we use a variation of the induction
schema~\eqref{eq:sw:ind} to reason about intervals of loop
iterations. Namely, we use the following schema of {\it bounded induction}
\begin{equation}\label{eq:tr:ind}\nonumber
\begin{array}{l}
\hspace*{-.75em}	\bigg(
		P({bl}) \land \hfill \text{ {\scriptsize (base case)}} \\[.25em]
\hspace*{-.75em}		  ~~\forall x'_\Nat.\Big(\big(bl\leq x' < br \land P(x') \big) \limp P(\suc(x'))\Big) 
		  \bigg) ~
		   \hfill \text{ {\scriptsize (inductive case)}} \\[.25em]
\hspace*{-.75em}	\limp 
		\forall x_\Nat. \Big(bl\leq x \leq br \limp P(x)\Big),
\end{array}
\end{equation}
 where $bl, br\in\natsort$ are term algebra expressions of $\natsort$,
 called respectively as left and right
 bounds of bounded induction.

\section{Programming Model  $\whilelang$} \label{sec:model}
We consider 
programs written in an imperative while-like programming language $\whilelang$. 
This section recalls terminology
from~\cite{barthe2019verifying}, however adapted to our setting of
safety verification. Unlike~\cite{barthe2019verifying},  we do not consider 
 multiple program traces in $\whilelang$.
 In
Section~\ref{sec:axiomatic-semantics}, we then introduce a generalized
program semantics in trace logic $\tracelogic$, extended with reachability predicates.

\begin{figure}[t!]
	\centering
	\begin{align*}
		\text{program}    :=& \text{ function}\\
		\text{function}    :=& \text{ \while{func main()\{} \text{context} \while{\}} }\\
		\text{subprogram} :=& \text{ statement} \mid \text{context}\\
		\text{statement} :=& \text{ atomicStatement}\\
		\mid& \text{ \while{if(} \text{condition} \while{)\{} \text{context} \while{\} else \{} \text{context} \while{\}}}\\
		\mid& \text{ \while{while(} \text{condition} \while{)\{} \text{context} \while{\}}}\\
		\text{context} :=& \text{ statement; ... ; statement}\\[-2em]
	\end{align*}
	\caption{Grammar of $\whilelang$.\vspace*{-2em}}
	\label{fig:while-grammar}
\end{figure}

Figure~\ref{fig:while-grammar} shows the (partial) grammar of our
programming model $\whilelang$,
emphasizing the  use of contexts to capture lists of statements.
An input program in $\whilelang$ has a single  \pv{main}-function, 
with arbitrary nestings of if-then-else conditionals and
while-statements.
We consider \textit{mutable and constant variables}, where
variables are either integer-valued numeric variables or arrays of such numeric variables. We include standard 
\textit{side-effect free expressions over booleans and integers}. 

\subsection{Locations and Timepoints}\label{sec:locations}

A program in $\whilelang$ is considered as sets of locations, with
each location corresponding to positions/lines of program
statements in the program. 
Given a program statement $\pv{s}$, we denote by 
$l_s$ its (program) location. We reserve the location
$l_{\textit{end}}$  to denote 
the end of a program.  
For programs with loops, some program locations might be revisited
multiple times. 
We therefore model locations $l_s$ corresponding to a statement
$\pv{s}$ as functions of \textit{iterations} when the respective location is
visited. For simplicity, we write
$l_s$ also for the functional representation of the location $l_s$ of
$\pv{s}$. 
We thus consider locations as timepoints of a program and treat them $l_s$ as being functions $l_s$ over
iterations. The target sort of locations $l_s$ is $\tpsort$. 
For each enclosing loop of a statement $\pv{s}$, 
the function symbol $l_s$ takes arguments of sort $\natsort$,
corresponding to loop iterations. Further, when $\pv{s}$ is a loop
itself, we also introduce a function symbol $n_s$ with argument and
target sort $\natsort$; intuitively, $n_s$ corresponds to the last
loop iteration of $\pv{s}$. 
We denote the set of all function symbols $l_s$ as $\tpsig$, whereas
the set of all function symbols $n_s$ is written as  $\textit{S}_{n}$.

\noindent\begin{example}\label{ex:locations}
We refer to program statements $\pv{s}$
by their (first) line number in Figure~\ref{fig:running}. 
Thus, $l_5$ encodes the timepoint corresponding to the first
assignment of $\pv{i}$ in the program (line~5). We write $l_{7}(\zero)$ and
$l_{7}(n_{7})$ to denote the timepoints of the first and last loop iteration, respectively. The timepoints  
$l_{8}(\suc(\zero))$ and $l_{8}(it)$ correspond to the beginning of
the loop body in the second and the $it$-th
loop iterations, respectively. \qed
\end{example}
\subsection{Expressions over Timepoints}\label{sec:tp}
We next introduce commonly used expressions over 
timepoints. 
For each  while-statement $\pv{w}$ of $\whilelang$, we introduce a
function  $it^\pv{w}$  that returns a unique variable of sort
$\natsort$ for $\pv{w}$, denoting loop iterations of \pv{w}. 

Let $w_1,\dots,w_k$  be the enclosing
loops for statement $\pv{s}$ and consider an arbitrary term $it$ of
sort $\natsort$. We define $tp_{\pv{s}}$ to be the expressions denoting 
the timepoints of statements \pv{s} as 
\begin{align*}
	tp_\pv{s} &:= l_s(it^{w_1}, \dots, it^{w_k})&&\text{ if $\pv{s}$ is non-while statement}\\
	tp_\pv{s}(it) &:= l_s(it^{w_1}, \dots, it^{w_k}, it) &&\text{ if $\pv{s}$ is while-statement}\\
	\mathit{lastIt}_\pv{s} &:= n_s(it^{w_1}, \dots, it^{w_k})&&\text{ if $\pv{s}$ is while-statement}
\end{align*}
If $\pv{s}$ is a while-statement, we also introduce $lastIt_\pv{s}$
to denote the last iteration of \pv{s}. 
Further, consider an arbitrary subprogram $\pv{p}$, that is, \pv{p} is either a statement or a context.
The timepoint $\mathit{start}_\pv{p}$ (parameterized by an iteration
of each enclosing loop) denotes the timepoint when the execution of
$\pv{p}$ has started and is defined as
$$\mathit{start}_\pv{p} := 
\begin{cases}
	tp_\pv{p}(\zero) &\text{ if \pv{p} is while-statement}\\
	tp_\pv{p} &\text{ if \pv{p} is non-while statement}\\
	\mathit{start}_{\pv{s}_1} &\text{ if \pv{p} is context \pv{s}\pvi{1};$\dots$;\pv{s}\pvi{k}}
\end{cases}
$$

We also introduce the timepoint $\mathit{end}_\pv{p}$ to denote the
timepoint upon which a subprogram $\pv{p}$ has been completely
evaluated and define it as

$$
\mathit{end}_\pv{p}:=
\begin{cases}	
  \mathit{start}_{\pv{s}}  &\text{if \pv{s} occurs after $\pv{p}$ in a context}\\
  \mathit{end}_{\pv{c}}  &\text{if \pv{p} is last statement in context \pv{c}}\\
	\mathit{end}_{\pv{s}} &\text{if \pv{p} is context of if-branch or } \\& \text{else-branch of \pv{s}}\\
	\mathit{tp}_\pv{s}(\suc(it^{s})) &\text{if \pv{p} is context of body of \pv{s}} \\
	l_\mathit{end} &\text{if $\pv{p}$ is top-level context}\\
\end{cases}
$$

Finally, if $s$ is the topmost statement of the top-level context in \pv{main()}, we define $$\mathit{start} := \mathit{start}_s.$$

\subsection{Program Variables}\label{sec:prgVars}
We express values of program variables $\pv{v}$ at various timepoints of the program execution. To this end, we model
(numeric) variables $\pv{v}$ as functions 
$v: \tpsort \mapsto \intsort,$
 where $v(tp)$ gives the value of $\pv{v}$ at timepoint $tp$.
 For array variables $\pv{v}$,
 we add an additional argument of sort $\intsort$, corresponding to
 the position where the array is accessed; that is, 
$ v: \tpsort \times \intsort \mapsto \intsort$.
The set of such function symbols corresponding to program variables is
denoted by $\pvsig$. 

Our framework for constant, non-mutable variables can be simplified by
omitting the timepoint argument in the functional representation of 
such program variables, as illustrated below.

\begin{example}\label{ex:prgVars}
For Figure \ref{fig:running}, we denote by $i(l_5)$ the value of program variable 
$\pv{i}$ before being assigned in line 5. 

As the array variable $\pv{a}$ is non-mutable (specified by \pv{const} in the program), we write $a(i(l_{8}(it)))$ for 
the value of array $\pv{a}$ at the position corresponding to the current value of $\pv{i}$ at timepoint $l_{8}(it)$.
For the  mutable array $\pv{b}$, we consider timepoints where
$\pv{b}$ has been updated and write $b(l_{9}(it),
j(l_{9}(it)))$ for the array $\pv{b}$ at position $\pv{j}$ at the
timepoint $l_{9}(it)$ during the loop.
\qed\\
\end{example}

We emphasize that we consider (numeric) program variables \pv{v} to be of sort
$\intsort$, whereas loop iterations $it$ are of sort  $\natsort$. 

\subsection{Program Expressions}\label{sec:exp}
Arithmetic constants and program expressions are modeled using integer
functions and predicates.
Let $\pv{e}$ be an arbitrary program expression and write $\llbracket \pv{e} \rrbracket (tp)$ to denote 
the value of the evaluation of $\pv{e}$ at timepoint $tp$. 

Let $v\in \pvsig$, that is a function $v$ denoting a program variable
$\pv{v}$. Consider $\pv{e},\pv{e}_1,\pv{e}_2$ to be program
expressions and 
let $tp_1, tp_2$ denote two timepoints. We define 
\begin{equation*} \label{eq:eq:expressions}
\begin{array}{l}
  Eq(v,tp_1,tp_2):= 
\\
\qquad\quad\left\{
		\begin{aligned}
			&\forall \mathit{pos}_{\Int}. \;\;v(tp_1, \mathit{pos}) \eql v(tp_2, \mathit{pos}), \hspace{-0.5em}, \text{ if \pv{v} is an array}\\&
			v(tp_1) \eql v(tp_2), \text{otherwise}
		\end{aligned}
              \right.
\end{array}
\end{equation*}

to denote that the program variable $\pv{v}$ has the same values at
$tp_1$ and $tp_2$.  

We further introduce
\begin{equation*}\label{eq:eqall:expressions}
  \textit{EqAll}(tp_1,tp_2) := \bigwedge_{v \in \pvsig} Eq(v,tp_1,tp_2)
\end{equation*}

to define that all program variables have the same values at
timepoints $tp_1$ and $tp_2$. We also define
\begin{equation*}
  \begin{array}{l}
	\mathit{Update}(v,e,tp_1,tp_2) := \\
\qquad\quad	 v(tp_2) \eql \llbracket \pv{e} \rrbracket(tp_1) 
	\land \bigwedge_{v' \in \pvsig \setminus \{v\}} Eq(v',tp_1,tp_2), 
  \end{array}
  \end{equation*}
asserting that the numeric program variable \pv{v} has been updated while
all other program variables $\pv{v'}$ remain unchanged. This definition is
further 
extended  to array updates as

\begin{equation*}
  \begin{array}{l}
	 \mathit{Update}\mathit{Arr}(v,e_1,e_2,tp_1,tp_2) :=\\
\qquad\quad \forall \mathit{pos}_{\Int}. \ (\mathit{pos} \neql \llbracket e_1\rrbracket(tp_1) \rightarrow 
		v(tp_2, \mathit{pos}) \eql v(tp_1, \mathit{pos}))
    \\
\qquad\quad \land \ v(tp_2, \llbracket e_1 \rrbracket(tp_1)) \eql \llbracket e_2 \rrbracket(tp_1) 	\\
\qquad\quad  \bigwedge_{v' \in \pvsig \setminus \{v\}} Eq(v',tp_1,tp_2). 	\nonumber 
  \end{array}
  \end{equation*}

  \begin{example}
	In Figure \ref{fig:running}, we refer to the value of $\pv{i+1}$ at timepoint $l_{12}(it)$ as
	$i(l_{12}(it))+1$. 
	Let $\pvsig^\textit{1}$ be the set of function symbols
        representing the program variables of Figure \ref{fig:running}.
      
	For an update of \pv{j} in line~10 at some iteration $it$, we derive 
	\begin{align*}
	Update&(j, \pv{j+1}, l_9(it), l_{10}(it))  :=  j(l_{10}(it)) \eql (j(l_9(it))+1) 
	\\& \land  \quad \bigwedge_{v' \in \pvsig^\textit{1} \setminus \{j\}} Eq(v', l_9(it), l_{10}(it)). 
	\end{align*}\qed
   \end{example}

\section{Axiomatic Semantics in Trace Logic $\tracelogic$
}\label{sec:axiomatic-semantics}

Trace logic $\tracelogic$ has been
introduced in~\cite{barthe2019verifying}, yet for the setting of
relational verification. 
In this paper we generalize the formalization 
of~\cite{barthe2019verifying} in three ways. First,  (i) we define
program semantics in a non-recursive manner using the $Reach$
predicate  to characterize the set of reachable locations within a
given program context (Section~\ref{sec:reach}). 
Second, and most importantly, (ii) we prove completeness of trace logic $\tracelogic$ with
respect to Hoare Logic (Theorem~\ref{thm:completeness}),  which could have not been achieved in the
setting of~\cite{barthe2019verifying}.
Finally, (iii) we introduce
the use of logic
$\tracelogic$ for safety verification
(Section~\ref{sec:verification}). 

\subsection{Trace Logic $\tracelogic$}
Trace logic $\tracelogic$ is an instance of
many-sorted first-order logic with equality. We define the signature
$\Sigma(\tracelogic)$  of trace logic as 
$$ \Sigma(\tracelogic) : = \natsig \union \intsig \union \tpsig \union \pvsig \union \lastsig,$$
containing the signatures of the theory of natural numbers (term
algebra) $\natsort$ and integers $\intsort$,
as well the respective sets of timepoints, program variables and last
iteration symbols as defined in section \ref{sec:model}.

We next define the semantics of $\whilelang$ in trace logic $\tracelogic$.

\subsection{Reachability and its Axiomatization}\label{sec:reach} 
We introduce a predicate $\mathit{Reach}: \tpsort \mapsto \mathbb{B}$
to capture the set of timepoints reachable in an execution and use
$\mathit{Reach}$ to define the axiomatic
semantics of $\whilelang$ in trace logic $\tracelogic$.
We define reachability $\mathit{Reach}$ as a predicate over
timepoints,
in contrast to defining reachability as a predicate over program
configurations such as in~\cite{hoder2012generalized, bjorner2015horn, fedyukovich2019quantified, ish2020putting}.

We axiomatize $\mathit{Reach}$ using trace
logic formulas as follows. 

\begin{definition}[$\mathit{Reach}$-predicate]
	For any context $c$, any statement \pv{s}, let
        $\mathit{Cond}_{s}$ be the expression denoting a potential
        branching condition in \pv{s}.
        We define 
	\begin{equation*}
          \begin{array}{l}
	\mathit{Reach}(\mathit{start}_c) := 
	\begin{cases}
		true,  \\ \quad \text{if \pv{c} is top-level context}\\
		\mathit{Reach}(\mathit{start}_{s}) \land \mathit{Cond}_{s}(\mathit{start}_{s}), \\ \quad\text{if \pv{c} is context of if-branch of \pv{s}}\\
		\mathit{Reach}(\mathit{start}_{s}) \land \neg\mathit{Cond}_{s}(\mathit{start}_{s}), \\ \quad\text{if \pv{c} is  context of else-branch of \pv{s}}\\
		\mathit{Reach}(\mathit{start}_{s}) \land it^{s}< \mathit{lastIt}_{s}, \\ \quad\text{if \pv{c} is context of body of \pv{s}}.
              \end{cases}
          \end{array}
    \end{equation*}

	For any non-while statement $\pv{s}^\prime$ occurring in 
	context \pv{c}, let
	$$\mathit{Reach}(\mathit{start}_{s^\prime}) := \mathit{Reach}(\mathit{start_c}),$$
	
	and for any while-statement $\pv{s}^\prime$ occurring in context \pv{c}, let 
	$$\mathit{Reach}(tp_{s^\prime}(it^{s^\prime})) := \mathit{Reach}(\mathit{start}_{c}) \land it^{s^\prime} \leq \mathit{lastIt}_{s^\prime}.$$

	Finally let $\mathit{Reach(\mathit{end})}:= true$.\qed
      \end{definition}

Note that our reachability predicate $\mathit{Reach}$ allows specifying
properties about intermediate timepoints
(since those properties can only hold if the referred timepoints are
reached) and supports reasoning about which locations are reached.

\subsection{Axiomatic Semantics of $\whilelang$}
We axiomatize the semantics of each program statement in $\whilelang$,
and define the semantics of a program  in $\whilelang$ as the conjunction of all these axioms.

\paragraph{Main-function}
Let $\pv{p\pvi{0}}$ be an arbitrary, but fixed program in
$\whilelang$; we give our
definitions relative to $\pv{p\pvi{0}}$. 
The semantics of $\pv{p\pvi{0}}$, denoted by $\llbracket \pv{p\pvi{0}}
\rrbracket$,  consists of a conjunction of one implication per statement, where each implication has the reachability of the start-timepoint of the statement as premise and the semantics of the statement as conclusion:
$$\llbracket \pv{p\pvi{0}} \rrbracket := \Land_{\pv{s} \text{ statement of \pv{p\pvi{0}}}}  \forall \mathit{enclIts}. \big(Reach(\mathit{start}_s) \limp \llbracket \pv{s} \rrbracket\big)$$

where $enclIts$ is the set of iterations $\{it^{w_1},\dots,
it^{w_n}\}$ of  all enclosing loops $w_1, \dots, w_n$ of some statement \pv{s} in $\pv{p\pvi{0}}$, and
the semantics $\llbracket \pv{s} \rrbracket$ of program statements $\pv{s}$ is defined as follows.

\paragraph{Skip}
Let $\pv{s}$ be a statement \while{skip}. Then 
\begin{equation}\label{semantics_skip}
\llbracket \pv{s}\rrbracket := EqAll(\mathit{end}_\pv{s},\mathit{start}_\pv{s})
\end{equation}

\paragraph{Integer assignments}
Let $\pv{s}$ be an assignment \while{v = e}, 
 where $\pv{v}$ is an integer-valued  program variable and
$\pv{e}$ is an expression.
The evaluation of $\pv{s}$ is performed in one step such that, after the evaluation, the variable $\pv{v}$ has the same value as $\pv{e}$ before the evaluation. 
All other variables remain unchanged and thus 
\begin{equation}\label{semantics_int_assign}
\llbracket \pv{s}\rrbracket := \mathit{Update}(v,e,\mathit{end}_s,\mathit{start}_s)
\end{equation}

\paragraph{Array assignments}
Consider $\pv{s}$ of the form \while{a[e$_1$] = e$_2$},
with $\pv{a}$ being an array variable and
$\pv{e\pvi{1}}, \pv{e\pvi{2}}$ being expressions.
The assignment is evaluated in one step. 
After the evaluation of $\pv{s}$, the array $\pv{a}$ contains the
value of $\pv{e\pvi{2}}$ before the evaluation at position
$\mathit{pos}$ corresponding to the value of $\pv{e\pvi{1}}$ before
the evaluation. The values at all other positions of $\pv{a}$ and all
other program variables remain unchanged and hence 
\begin{equation}\label{semantics_arr_assign}
\llbracket \pv{s}\rrbracket := \mathit{UpdateArr}(v,e_1,e_2,\mathit{end}_s,\mathit{start}_s)
\end{equation}

\paragraph{Conditional \ifThenElseStatement{} Statements}
Let $\pv{s}$ be \while{if(Cond)\{c$_1$\} else \{c$_2$\}}.
The semantics of $\pv{s}$ states that entering the if-branch and/or entering the else-branch 
does not change the values of the variables and we have 
\begin{subequations}
\begin{align}
\label{semantics_ite_1}
\llbracket \pv{s} \rrbracket := 
&		    && \llbracket \pv{Cond} \rrbracket (\mathit{start}_s) \rightarrow \mathit{EqAll}(\mathit{start}_\pv{c\pvi{1}},\mathit{start}_s)\\
\label{semantics_ite_2}
&\land	&\neg&\llbracket \pv{Cond} \rrbracket (\mathit{start}_s) \rightarrow \mathit{EqAll}(\mathit{start}_\pv{c\pvi{2}},\mathit{start}_s)
\end{align}
\end{subequations}
 where the semantics
$\llbracket \pv{Cond} \rrbracket$ of the expression $\pv{Cond}$ is according
to Section~\ref{sec:exp}. 

\paragraph{While-Statements}

Let $\pv{s}$ be the \whileStatement-statement \while{while(Cond)\{c\}}.
We refer to $\pv{Cond}$ as the \emph{loop condition}.
The semantics of $\pv{s}$ is captured by conjunction of the  following three  properties: 
(\ref{semantics_while_1}) the iteration $\mathit{lastIt}_s$ is the first iteration where $\pv{Cond}$ does not hold,
(\ref{semantics_while_2}) entering the loop body does not change the values of the variables, 
(\ref{semantics_while_3}) the values of the variables at the end of evaluating $\pv{s}$ 
are the same as the variable values 
at the loop condition location in iteration $\mathit{lastIt}_s$. As
such, we have 
\begin{subequations}
\begin{align}
\llbracket \pv{s} \rrbracket :=      &&&\forall it^s_{\Nat}. \; (it^s<\mathit{lastIt}_\pv{s} \rightarrow \llbracket \pv{Cond} \rrbracket (tp_\pv{s}(it^s))) \nonumber\\
&\land &&\neg \llbracket \pv{Cond} \rrbracket (tp(\mathit{lastIt}_\pv{s})) \label{semantics_while_1}\\
&\land &&\forall it^s_{\Nat}. \; (it^s<\mathit{lastIt}_\pv{s} \rightarrow \mathit{EqAll}(\mathit{start}_\pv{c},tp_\pv{s}(it^s)) \label{semantics_while_2}\\
&\land && \mathit{EqAll}(\mathit{end}_\pv{s}, tp_s(\mathit{lastIt}_\pv{s})) \label{semantics_while_3}
\end{align}
\end{subequations}

\subsection{Soundness and Completeness.}

The axiomatic semantics of $\whilelang$ in trace logic is sound. That
is, given a program $\pv{p}$ in $\whilelang$ and a trace logic
property $F \in \tracelogic$,
we have that any interpretation in $\tracelogic$ is a model of
$F$ according to the small-step operational semantics of
$\whilelang$. We conclude the next theorem - and refer
to Appendix~\ref{sec:soundness} for details.

\begin{theorem}[$\whilelang$-Soundness\label{thm:sound}]
	Let $\pv{p}$ be a program. Then the axiomatic semantics
        $\llbracket \pv{p} \rrbracket$ is sound with respect to
        standard small-step operational semantics.
        \qed\\[-.5em]
\end{theorem}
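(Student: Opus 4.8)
The plan is to establish soundness by relating each trace logic axiom in $\llbracket \pv{p} \rrbracket$ to the small-step operational semantics of $\whilelang$, arguing that any execution trace of $\pv{p}$ induces an interpretation that validates every conjunct. First I would make the connection between the two worlds precise: a small-step operational semantics produces a sequence of configurations, and I would define a canonical interpretation of the trace logic signature $\Sigma(\tracelogic)$ from such a run. Concretely, the function symbols $l_s \in \tpsig$ are interpreted so that $l_s$ applied to a tuple of iteration counts yields the configuration index at which statement $\pv{s}$ is reached in those iterations; the symbols $n_s \in \lastsig$ are interpreted as the actual final iteration count of loop $\pv{s}$; and each program variable symbol $v \in \pvsig$ is interpreted by reading off the stored value of $\pv{v}$ in the configuration named by its timepoint argument (with the extra positional argument for arrays). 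The $\mathit{Reach}$ predicate is interpreted as true exactly on those timepoints that actually occur in the run.

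Next I would proceed by structural induction on the program $\pv{p}$, showing for each statement form that the corresponding conjunct $\forall \mathit{enclIts}.\,(Reach(\mathit{start}_s) \limp \llbracket \pv{s} \rrbracket)$ holds in the canonical interpretation. For the base cases, integer and array assignments and $\skipStatement$, the argument reduces to checking that the single small-step transition matches the $\mathit{Update}$, $\mathit{UpdateArr}$, or $\mathit{EqAll}$ relation, using the definitions of $\mathit{start}_s$ and $\mathit{end}_s$ to line up the source and target timepoints. For the conditional, I would case-split on whether $\llbracket \pv{Cond} \rrbracket$ evaluates to true in the reached configuration and verify that entering the taken branch preserves all variable values, matching \eqref{semantics_ite_1} and \eqref{semantics_ite_2}. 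The crucial inductive step is the \whileStatement-statement: here I must check the three conjuncts \eqref{semantics_while_1}--\eqref{semantics_while_3} against the operational unrolling of the loop, confirming that $\mathit{lastIt}_s$ is interpreted as the least iteration at which $\pv{Cond}$ fails, that every earlier iteration satisfies the condition and enters the body without side effects, and that the post-loop state equals the loop-head state at iteration $\mathit{lastIt}_s$.

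I expect the main obstacle to be the while-case, and specifically the bookkeeping needed to make the iteration-indexed timepoints match the operational trace. The difficulty is twofold: first, nested loops mean that a single statement is parameterized by a whole tuple $\mathit{enclIts}$ of iteration variables, so the canonical interpretation of $l_s$ must be a genuine function of all enclosing loop counters, and I would need to argue by an inner induction on the loop iteration count that the correspondence between iteration number and configuration index is well-defined and monotone. Second, soundness is formulated over models of partial correctness, so I would only need to discharge the axioms along \emph{terminating} prefixes of the run; for a loop that executes $n_s$ times I would establish the body and condition conjuncts by induction on the bounded interval $0 \le it^s \le \mathit{lastIt}_s$, which is exactly where the bounded induction schema \eqref{eq:tr:ind} underpins the reasoning. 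Once the while-case is settled, the outer structural induction closes routinely, and since every conjunct of $\llbracket \pv{p} \rrbracket$ holds in the canonical interpretation derived from an arbitrary operational run, any trace logic consequence $F$ of $\llbracket \pv{p} \rrbracket$ is validated by that run, giving the claimed soundness. The full details are deferred to Appendix~\ref{sec:soundness}.
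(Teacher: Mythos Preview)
Your proposal is correct and follows essentially the same route as the paper: construct an interpretation from an operational run (the paper calls these \emph{execution interpretations}, built by an iterative algorithm rather than via configuration indices, but the idea is the same), then verify each conjunct $\forall\mathit{enclIts}.(\mathit{Reach}(\mathit{start}_s)\limp\llbracket s\rrbracket)$ by case analysis on the statement form, with an inner bounded induction over $0\leq it^s\leq \mathit{lastIt}_s$ for the \whileStatement-case. The only organisational difference is that the paper factors your ``inner induction'' out into a separate reachability lemma (Lemma~\ref{lemma:reach}), proved once by structural induction on subprograms, which is then invoked in the \whileStatement- and context-cases of the main case analysis; your single structural induction folds that lemma into the main argument, which is fine but slightly obscures that the semantics $\llbracket p_0\rrbracket$ is a flat conjunction rather than a recursive definition.
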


Next, we show that the axiomatic semantics of $\whilelang$ in trace
logic $\tracelogic$  is complete with respect to Hoare logic \cite{hoare1969axiomatic}, as follows.

Intuitively, a Hoare Triple $\{F_1\}\pv{p}\{F_2\}$ corresponds to the trace logic formula
\begin{equation}\label{eq:translation-hoare}
\hspace*{-.75em}	\forall
\mathit{enclIts}. \big(\mathit{Reach}(\mathit{start}_{p}) \limp
([F_1](\mathit{start}_p)\rightarrow [F_2](\mathit{end}_p))\big)\hspace*{-.75em}
\end{equation}

where the expressions $[F_1](\mathit{start}_p)$ and
$[F_2](\mathit{end}_p)$ denote the result of adding to each program
variable in $F_1$ and $F_2$ the timepoints $\mathit{start}_p$
respectively  $\mathit{end}_p$ as first arguments.
We therefore define that the axiomatic semantics of $\whilelang$ is \emph{complete with respect to Hoare logic}, if for any Hoare triple $\{F_1\}\pv{p}\{F_2\}$ valid relative to the background theory $\mathcal{T}$, the corresponding trace logic formula \eqref{eq:translation-hoare}	is derivable from the axiomatic semantics of $\whilelang$ in the background theory $\mathcal{T}$.
With this definition at hand, we get the following result, proved
formally in Appendix~\ref{sec:completeness}.
\begin{theorem}[$\whilelang$-Completeness with respect to Hoare logic\label{thm:completeness}]
	The axiomatic semantics of $\whilelang$  in trace logic is
        complete with respect to Hoare logic.
        \qed
\end{theorem}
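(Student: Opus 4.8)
The plan is to prove completeness by reducing, via the relative completeness of Hoare logic, to the problem of simulating each individual Hoare inference rule inside trace logic. Concretely, by Cook's theorem a triple $\{F_1\}\pv{p}\{F_2\}$ that is valid relative to $\mathcal{T}$ is derivable in the Hoare calculus once all $\mathcal{T}$-valid assertions are available as an oracle for the consequence rule (this is where the expressiveness of the assertion language is used). It therefore suffices to show, by structural induction on such a Hoare derivation, that whenever the translations \eqref{eq:translation-hoare} of the premises of a Hoare rule are derivable from the axiomatic semantics $\llbracket \pv{p\pvi{0}} \rrbracket$ in $\mathcal{T}$, so is the translation of its conclusion. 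The induction hypothesis is exactly this statement for the immediate subderivations, and the global reachability premise $\mathit{Reach}(\mathit{start}_{p})$ of \eqref{eq:translation-hoare} is propagated to the relevant sub-timepoints using the defining clauses of the $\mathit{Reach}$-predicate, so that the guarded implications of $\llbracket \pv{p\pvi{0}} \rrbracket$ can be triggered.

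For the base cases I would argue directly from the statement axioms. For $\skipStatement$, \eqref{semantics_skip} yields $\mathit{EqAll}(\mathit{end}_s,\mathit{start}_s)$, so every tagged assertion satisfies $[F](\mathit{start}_s) \liff [F](\mathit{end}_s)$ and the skip triple is immediate. For an assignment the key device is a substitution lemma: for a program variable $\pv{v}$, expression $\pv{e}$ and assertion $F$, the semantics $\mathit{Update}$ of \eqref{semantics_int_assign} (and $\mathit{UpdateArr}$ of \eqref{semantics_arr_assign} for arrays) makes $[F[\pv{e}/\pv{v}]](\mathit{start}_s)$ and $[F](\mathit{end}_s)$ equivalent modulo $\llbracket \pv{p\pvi{0}} \rrbracket$, which is exactly the assignment axiom. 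Sequential composition reduces to the definitional identity $\mathit{end}_{p_1} = \mathit{start}_{p_2}$ for adjacent statements in a context, so the postcondition timepoint of the first premise coincides with the precondition timepoint of the second. The conditional rule is obtained by case analysis on $\llbracket \pv{Cond}\rrbracket(\mathit{start}_s)$ using \eqref{semantics_ite_1} and \eqref{semantics_ite_2}, which equate each branch-entry state with $\mathit{start}_s$, while the branch ends coincide with $\mathit{end}_s$. The consequence rule follows propositionally, noting that a $\mathcal{T}$-valid implication $F \limp F'$ stays $\mathcal{T}$-valid after tagging all its program variables with one fixed timepoint, since tagging is a uniform substitution that does not touch the interpreted symbols of $\mathcal{T}$.

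The main obstacle is the $\whileStatement$ rule, which is where genuine induction over loop iterations is required. Given the body triple $\{I \wedge \pv{Cond}\}\pv{c}\{I\}$ as induction hypothesis, I would instantiate the bounded induction schema of Section~\ref{sec:model} with $P(it) := [I](tp_s(it))$ and with bounds $\zero$ and $\mathit{lastIt}_s$. The base case $P(\zero)$ is the loop precondition at $\mathit{start}_s = tp_s(\zero)$. For the inductive case, assuming $\zero \leq it < \mathit{lastIt}_s$ and $[I](tp_s(it))$: by \eqref{semantics_while_1} the loop condition holds, $\llbracket\pv{Cond}\rrbracket(tp_s(it))$; by the body-context clause of $\mathit{Reach}$ (guarded precisely by $it < \mathit{lastIt}_s$) together with the $\mathit{EqAll}(\mathit{start}_\pv{c},tp_s(it))$ of \eqref{semantics_while_2}, the tagged formula $[I \wedge \pv{Cond}](\mathit{start}_\pv{c})$ holds; the induction hypothesis then gives $[I](\mathit{end}_\pv{c})$; and since $\mathit{end}_\pv{c} = tp_s(\suc(it))$ by the definition of $\mathit{end}$ for loop bodies, we obtain $P(\suc(it))$. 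Bounded induction yields $[I](tp_s(it))$ for all $\zero \leq it \leq \mathit{lastIt}_s$, in particular at $\mathit{lastIt}_s$; combining this with the second conjunct of \eqref{semantics_while_1} gives $\neg\llbracket\pv{Cond}\rrbracket(tp_s(\mathit{lastIt}_s))$, and the $\mathit{EqAll}(\mathit{end}_s,tp_s(\mathit{lastIt}_s))$ of \eqref{semantics_while_3} transfers $I \wedge \neg\pv{Cond}$ to $\mathit{end}_s$, which is exactly the translated loop postcondition.

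I expect two technical pressure points. The first is the substitution lemma for assignments, which must be stated and proved uniformly for numeric and array variables and threaded through the $\mathit{Update}$ and $\mathit{UpdateArr}$ axioms. The second is the careful alignment of timepoints and reachability conditions in the $\whileStatement$ case, where the bounds of the induction schema must match $\mathit{lastIt}_s$ and the body's reachability guard $it < \mathit{lastIt}_s$ exactly, and where $\mathit{end}_\pv{c}$ must be recognised as $tp_s(\suc(it))$. Soundness of the construction (Theorem~\ref{thm:sound}) and the non-recursive $\mathit{Reach}$-axiomatization guarantee that no circularity is introduced when the loop-body semantics is unfolded inside the inductive step.
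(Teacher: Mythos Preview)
Your proposal is correct and follows essentially the same route as the paper: a structural induction on the Hoare derivation, with the \whileStatement{} case discharged by the bounded induction schema instantiated with $P(it):=[I](tp_s(it))$ between $\zero$ and $\mathit{lastIt}_s$, and the other rules handled by rewriting along the $\mathit{EqAll}$/$\mathit{Update}$ equalities and pushing reachability to sub-timepoints.

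One small difference worth noting: you preface the induction with an appeal to Cook's relative completeness theorem to pass from ``valid relative to $\mathcal{T}$'' to ``derivable in the Hoare calculus with a $\mathcal{T}$-oracle for consequence''. The paper's actual working definition of completeness (in the appendix) is stated directly for Hoare-\emph{derivable} triples, so the paper does not need this reduction and starts the structural induction immediately. Your extra step is harmless and indeed reconciles the stronger phrasing used in the main text; just be aware that Cook's theorem requires the assertion language to be expressive for $\whilelang$, an assumption the paper leaves implicit. Otherwise your case analysis, including the identification $\mathit{end}_\pv{c}=tp_s(\suc(it))$ and the use of \eqref{semantics_while_1}--\eqref{semantics_while_3}, matches the paper's argument point for point.
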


\section{Trace Logic for Safety Verification} \label{sec:verification}

We now introduce the use of trace logic $\tracelogic$ for verifying safety
properties of $\whilelang$ programs. We consider safety properties
$F$ 
expressed in first-order logic with theories, as illustrated in
line~15 of Figure~\ref{fig:running}.
Thanks to soundness and completeness of the axiomatic semantics of
$\whilelang$, a partially correct program $\pv{p}$ with regard to $F$ can be
proved to be correct using the axiomatic semantics of $\whilelang$ in
trace logic $\tracelogic$.  
That is, we  assume termination and establish partial program
correctness. Assuming the existence of an iteration violating the loop
condition can be help backward reasoning and, in particular, automatic splitting of loop iteration intervals.

However, proving correctness of a
program $\pv{p}$  annotated with a safety property $F$
faces the reasoning challenges of the underlying logic, in our case
of trace logic. Due to the presence of loops in $\whilelang$, a
challenging
aspect in using trace logic for safety verification is to handle
{inductive reasoning} as induction cannot be generally expressed in
first-order logic. To circumvent the challenge of inductive reasoning
and automate
verification using trace logic, we introduce 

a set of first-order lemmas, called \emph{trace lemmas}, and extend 
the semantics of $\whilelang$ programs in trace logic with these trace lemmas.
Trace lemmas describe generic inductive properties over arbitrary loop
iterations and any logical consequence of trace lemmas yields a valid
program loop property as well. We next summarize our approach  to program verification
using trace logic and then address the challenge of inductive
reasoning in trace logic $\tracelogic$. 

\subsection{Safety Verification in Trace Logic} \label{sec:safety}
Given a program $\pv{p}$ in $\whilelang$ and a safety property $F$,

\begin{itemize}
\item[(i)] we
express program semantics $\llbracket \pv{p} \rrbracket$ in trace
logic $\tracelogic$, as given in Section~\ref{sec:axiomatic-semantics}; 
\item[(ii)] we formalize
the safety property in trace logic $\tracelogic$, that is we express $F$ by using
program 
variables as functions of locations and timepoints (similarly as
in~\eqref{ex:running:prop2}).
For simplicity, let us denote the trace
logic formalization of $F$ also by $F$; 
\item[(iii)] we introduce instances $\mathcal{T_L^{\pv{p}}}$
of  a set 
$\mathcal{T_L}$ of trace lemmas, by instantiating trace
lemmas with program variables, locations and timepoints of
$\pv{p}$;
\item[(iv)] to verify $F$,  we then show that $F$ is a logical
consequence of $\llbracket \pv{p} \rrbracket \wedge
\mathcal{T_L^{\pv{p}}}$;
\item[(v)] however to conclude that $\pv{p}$ is partially correct
with regard to $F$, two more challenges need to be addressed. First, in addition to Theorem~\ref{thm:sound}, 
soundness of our trace lemmas $\mathcal{T_L}$ needs to be established, implying that our trace lemma
instances $\mathcal{T_L^{\pv{p}}}$ are also sound. Soundness of
$\mathcal{T_L^{\pv{p}}}$ implies then validity of $F$, whenever
$F$ is proven to be a logical consequence of sound
formulas $\llbracket \pv{p} \rrbracket \wedge
\mathcal{T_L^{\pv{p}}}$. However, to ensure that $F$ is provable in trace
logic, as a second challenge we need to ensure that our
trace lemmas $\mathcal{T_L}$, and thus their instances $\mathcal{T_L^{\pv{p}}}$, are strong
enough to prove $\llbracket \pv{p} \rrbracket \wedge \mathcal{T_L^{\pv{p}}}\implies F$. That is, proving that $F$ is a safety
assertion of $\pv{p}$ in our setting requires finding a suitable set $\mathcal{T_L}$ of
trace lemmas. 
\end{itemize}
In the remaining of this
section, we address (v) and show that our trace lemmas $\mathcal{T_L}$
are 
sound consequences of bounded  induction (Section~\ref{sec:TL}). Practical
evidence for using our trace lemmas are  further given in Section~\ref{sec:experiments}. 
\subsection{Trace Lemmas $\mathcal{T_L}$ for Verification}\label{sec:TL}
Trace logic
properties support arbitrary quantification over timepoints and
describe values of program variables at arbitrary loop  iterations and
timepoints. We therefore can relate timepoints with values of program
variables in trace logic $\tracelogic$, allowing us to describe the value
distributions of program variables as functions of timepoints
throughout program executions. As such, trace logic $\tracelogic$ supports 
 
\begin{enumerate}
	\item[(1)] reasoning about the {\it existence} of a specific
          loop iteration, allowing us to split the range of loop
          iterations at a particular timepoint, based  on the safety
          property we want to prove. For example, we can express and derive loop iterations
          corresponding to timepoints where one program variable takes
          a specific value for {\it the first time during loop
            execution}; 
	\item[(2)] universal quantification over the array content and
          range of 
          loop iterations bounded by two arbitrary left and right
          bounds, allowing us 
          to apply instances of the induction
          scheme~\eqref{eq:tr:ind} within a range of loop iterations
        bounded, for example, by $it$ and $lastIt_s$ for some while-statement \pv{s}. 
      \end{enumerate}
      Addressing these benefits of trace logic, we 
      
      express generic patterns of inductive program properties as
      \emph{trace lemmas}.

      Identifying a  suitable set $\mathcal{T_L}$ of trace lemmas to automate
      inductive reasoning in trace logic $\tracelogic$ is however challenging and
      domain-specific. We propose  three
      trace lemmas for inductive reasoning over arrays and integers,
      by considering
      \begin{description}
        \item[(A1)] one trace lemma
  
      describing how values of program variables change during an
      interval of loop iterations;
      
      \item[(B1-B2)] two trace lemmas to describe the behavior of loop
        counters. 
      \end{description}
     
      We prove soundness of our trace lemmas - below we
      include only one proof and refer 
      to Appendix~\ref{sec:trace-lemmas} for further details.

\subsubsection*{\textbf{(A1) Value Evolution Trace Lemma}} \label{lemma:valueEvolution}
Let $\pv{w}$ be a while-statement, let $\pv{v}$ be a mutable program
variable and let $\circ$ be a reflexive and transitive relation -
that is $\eql$ or $\leq$ in the setting of trace logic. 
The \emph{value evolution trace lemma of $\pv{w}$, $\pv{v}$, and $\circ$} is defined as
\begin{equation}\label{eq:TLA1}
  \begin{array}{l}
    \forall bl_{\natsort}, br_{\natsort}.\\
    \bigg( 
		\forall it_{\natsort}. 
			\Big(( bl\leq it < br \land v(tp_\pv{w}(bl)) \circ v(tp_\pv{w}(it))) 
			\\  \qquad \qquad\limp v(tp_\pv{w}(bl)) \circ v(tp_\pv{w}(\suc(it)))\Big) \tag{A1}
	\\\quad\limp 
		\big( bl \leq br \limp v(tp_\pv{w}(br)) \circ v(tp_\pv{w}(br)) \big) 
\bigg) 
  \end{array}
  \end{equation}
In our work, the value evolution trace
lemma is mainly instantiated with
the equality predicate $\eql$
to conclude that the value of a variable does not change during a
range of loop iterations, provided that the variable value does not
change at any of the considered loop iterations.

\begin{example}
  For Figure~\ref{fig:running}, the value evaluation trace lemma (A1) 
  yields the property
\begin{equation*}\label{eq:running:evolution}
  \begin{array}{l}
\forall j_\intsort.\ \forall bl_\natsort.\ \forall br_\natsort.\ \\
\bigg( 
 	\forall it_\natsort. \Big( (bl \leq it < br \ \wedge \  b(l_8(bl), j) = b(l_8(it), j) ) \\
			\qquad \qquad \rightarrow b(l_8(bl), j) = b(l_8(s(it)), j)
\Big) \\ \rightarrow 
\big( bl \leq br \rightarrow b(l_8(bl), j) = b(l_8(br), j) \big)
\bigg), 
  \end{array}
\end{equation*}
which allows to prove that the value of $\pv{b}$ at some position
$\pv{j}$
remains the same from the timepoint $it$ the value was first set until
the end of program execution. That is, we derive $b(l_9(end), j(l_9(it))) = a(i(l_8(it)))$.\qed
\end{example}

We next prove soundness of our trace lemma~\eqref{eq:TLA1}.

\noindent{\bf Proof {\it (Soundness Proof of Value Evolution Trace
    Lemma~\eqref{eq:TLA1})}} 
	Let $bl$ and $br$ be arbitrary but fixed and assume that the 
        premise of the outermost implication of~\eqref{eq:TLA1}
        holds. That is, 
        \begin{equation}\label{eq:TLA1:Prem}
          \begin{array}{l}
\forall it_{\natsort}.  \big(( bl\leq it < br \land v(tp_\pv{w}(bl)) \circ v(tp_\pv{w}(it))) \\
			\qquad\quad\limp v(tp_\pv{w}(bl)) \circ
            v(tp_\pv{w}(\suc(it)))\big)
            \end{array}
\end{equation}
        We use the induction axiom
        scheme~\eqref{eq:tr:ind}  and consider its instance
        with $P(it) := v(tp_\pv{w}(bl)) \circ v(tp_\pv{w}(it))$,
        yielding the following instance of~\eqref{eq:tr:ind}:
	\begin{subequations}	
	\begin{align}
	\hspace*{-1em} 	&\Big( v(tp_\pv{w}(bl)) \circ v(tp_\pv{w}(it)) \quad
          \wedge \label{form:A1-a}\\
	\hspace*{-1em} 	&\quad\forall it_{\natsort}. 
			\big(( bl\leq it < br \land v(tp_\pv{w}(bl)) \circ v(tp_\pv{w}(it))) \label{form:A1-b}\\
	\hspace*{-1em} 	&	\qquad\qquad\limp v(tp_\pv{w}(bl)) \circ v(tp_\pv{w}(\suc(it)))\big) \Big)
		\nonumber\\
               \hspace*{-1em} &	\limp\forall it_{\natsort}. \Big(bl\leq it \leq br \limp v(tp_\pv{w}(bl)) \circ v(tp_\pv{w}(it))\Big)\label{form:A1-c}
	\end{align}
	\end{subequations} 
       Note that  the base case property~\eqref{form:A1-a} holds
       since  $\circ$ is reflexive. Further, the inductive
       case~\eqref{form:A1-b} holds also  since it is
       implied by~\eqref{eq:TLA1:Prem}. We thus derive 
       property~\eqref{form:A1-c}, and 
       in particular 
	$bl\leq br \leq br \limp v(tp_\pv{w}(bl)) \circ
        v(tp_\pv{w}(br))$. Since $\leq$ is reflexive, we conclude $bl\leq br \limp v(tp_\pv{w}(bl)) \circ
        v(tp_\pv{w}(br))$, proving thus our trace
        lemma~\eqref{eq:TLA1}.
        \qed

\vskip.5em

\subsubsection*{\textbf{(B1) Intermediate
    Value Trace Lemma}} \label{lemma:intermediateValue}
Let $\pv{w}$ be a while-statement and let $\pv{v}$ be a mutable program
variable. We call $\pv{v}$ to be \emph{dense} if the following holds:
\begin{align*}
 \mathit{De}&\mathit{nse}_{w,v} :=  \forall it_{\natsort}.
\Big(
	it<\mathit{lastIt}_\pv{w}
\limp\\
	&\big(
			v(tp_\pv{w}(\suc(it)))=v(tp_\pv{w}(it))
		 \ \lor \ \\ &
			v(tp_\pv{w}(\suc(it)))=v(tp_\pv{w}(it)) + 1
	\big)
\Big)
\end{align*}

The \emph{intermediate value trace lemma of $\pv{w}$ and $\pv{v}$} is
defined as
\begin{equation}\label{eq:TLB1}
  \begin{array}{l}
	 \forall  x_{\intsort}. \Big(
		\big(
			\mathit{Dense}_{w,v} \land
			v(tp_\pv{w}(\zero)) \leq x < v(tp_\pv{w}(\mathit{lastIt}_\pv{w}))
		\big) \limp \\
		\quad\quad \exists it_{\natsort}.
		\big(
				it < \mathit{lastIt}_\pv{w}
			\land\
				v(tp_\pv{w}(it)) \eql x \ \land \tag{B1}\\
			\qquad\quad\quad \		
				v(tp_\pv{w}(\suc(it))) \eql v(tp_\pv{w}(it)) + 1
		\big)
	\Big)
  \end{array}\hspace*{-2em}
  \end{equation}
The intermediate value trace lemma~\eqref{eq:TLB1} allows us conclude
that if the variable $\pv{v}$ is dense, and if the value $x$ is between the
value of $\pv{v}$ at the beginning of the loop and the value of $\pv{v}$ at the
end of the loop, then there is an iteration in the loop, where $\pv{v}$ has
exactly the value $x$ and is incremented.
This trace lemma is mostly used to find specific iterations
corresponding to positions $x$ in an array.

\begin{example}
In Figure~\ref{fig:running}, using trace lemma~\eqref{eq:TLB1} we
synthesize the iteration $it$ such that $b(l_9(it), j(l_9(it))) =
a(i(l_8(it)))$.\qed
\end{example}

\vskip.5em

\subsubsection*{\textbf{(B2) Iteration Injectivity Trace Lemma}}
Let $\pv{w}$ be a while-statement and let $\pv{v}$ be a mutable program variable.
The \emph{iteration injectivity trace lemma of $\pv{w}$ and $\pv{v}$} is
\begin{align*}
	\forall it^1_\Nat, it^2_\Nat. \Big(
		&\big(
			\mathit{Dense}_{w,v} \land
			v(tp_\pv{w}(\suc(it^1))) = v(tp_\pv{w}(it^1)) + 1  \\& \land
			it^1 < it^2 \leq \mathit{lastIt}_\pv{w}
		\big)\tag{B2}\\
		&\limp
		v(tp_\pv{w}(it^1)) \neql v(tp_\pv{w}(it^2))
	\Big)
\end{align*}

The trace lemma (B2) states that a strongly-dense variable visits each array-position at most once.
As a consequence, if each array position is visited only once in a loop, we know that its value has not changed after the first visit, and in particular the value at the end of the loop is the value after the first visit. 
\begin{example}
  Trace lemma (B2) is necessary in Figure~\ref{fig:running} to apply
  the value evolution trace lemma (A1) for \pv{b}, as we need to
  make sure we will never reach the same position of \pv{j}
  twice. \qed
  \end{example}

Based on the soundness of our trace lemmas, we conclude the next
result.

\begin{theorem}[Trace Lemmas and Induction]
Let $\pv{p}$ be a program. Let $L$ be a trace lemma for some
while-statement $\pv{w}$ of $\pv{p}$ and some variable $\pv{v}$ of
$\pv{p}$.
Then $L$ is a consequence of the bounded induction
scheme~\eqref{eq:tr:ind} and of the axiomatic semantics of $\llbracket
\pv{p} \rrbracket$ in trace logic $\tracelogic$.
\qed
\end{theorem}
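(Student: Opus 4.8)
The plan is to prove the statement by a case analysis on which of the three trace lemmas, (A1), (B1), or (B2), the lemma $L$ is, showing in each case that $L$ is derivable from a single instance of the bounded induction scheme~\eqref{eq:tr:ind} together with the background theories of $\natsort$ and $\intsort$; the axiomatic semantics $\llbracket \pv{p} \rrbracket$ enters only through the program-specific instantiation of the symbols $tp_\pv{w}$, $\mathit{lastIt}_\pv{w}$ and $v$, while the density properties $\mathit{Dense}_{w,v}$ occurring in (B1) and (B2) appear as antecedents and may therefore be assumed. The case of (A1) is already discharged by the soundness proof given above, where we instantiate~\eqref{eq:tr:ind} with $P(it) := v(tp_\pv{w}(bl)) \circ v(tp_\pv{w}(it))$ and exploit reflexivity of $\circ$; so it remains to treat (B1) and (B2).

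For (B2), I would first extract monotonicity of $\pv{v}$ along iterations from $\mathit{Dense}_{w,v}$. Concretely, I fix $it^1$ satisfying the increment hypothesis $v(tp_\pv{w}(\suc(it^1))) = v(tp_\pv{w}(it^1)) + 1$ and instantiate~\eqref{eq:tr:ind} with left bound $\suc(it^1)$, right bound $\mathit{lastIt}_\pv{w}$, and $P(it) := v(tp_\pv{w}(\suc(it^1))) \leq v(tp_\pv{w}(it))$. The base case holds by reflexivity of $\leq$, and the inductive step follows directly from $\mathit{Dense}_{w,v}$, since at each iteration $v$ either stays equal or increases by one, so $v(tp_\pv{w}(\suc(it))) \geq v(tp_\pv{w}(it))$. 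This yields $v(tp_\pv{w}(\suc(it^1))) \leq v(tp_\pv{w}(it^2))$ for every $it^2$ with $it^1 < it^2 \leq \mathit{lastIt}_\pv{w}$; combining with the increment hypothesis gives $v(tp_\pv{w}(it^2)) \geq v(tp_\pv{w}(it^1)) + 1 > v(tp_\pv{w}(it^1))$, hence $v(tp_\pv{w}(it^1)) \neql v(tp_\pv{w}(it^2))$, which is the conclusion of (B2). Equivalently, the same monotonicity can be obtained by invoking the already-established (A1) with $\circ := \leq$.

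For (B1), the difficulty is that bounded induction proves universally quantified statements whereas (B1) asserts the \emph{existence} of a crossing iteration, so I would argue by contraposition. Assuming $\mathit{Dense}_{w,v}$ and $v(tp_\pv{w}(\zero)) \leq x < v(tp_\pv{w}(\mathit{lastIt}_\pv{w}))$, suppose no iteration $it < \mathit{lastIt}_\pv{w}$ satisfies both $v(tp_\pv{w}(it)) \eql x$ and $v(tp_\pv{w}(\suc(it))) \eql v(tp_\pv{w}(it)) + 1$. I then instantiate~\eqref{eq:tr:ind} with bounds $\zero$ and $\mathit{lastIt}_\pv{w}$, and $P(it) := v(tp_\pv{w}(it)) \leq x$. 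The base case $P(\zero)$ is the lower bound hypothesis. In the inductive step at $it < \mathit{lastIt}_\pv{w}$ with $v(tp_\pv{w}(it)) \leq x$, density splits into two cases: if $v$ stays equal, then $P(\suc(it))$ is immediate; if $v$ increments, the no-crossing assumption forbids $v(tp_\pv{w}(it)) \eql x$, so $v(tp_\pv{w}(it)) < x$ and hence $v(tp_\pv{w}(\suc(it))) = v(tp_\pv{w}(it)) + 1 \leq x$. Thus $P(\mathit{lastIt}_\pv{w})$ holds, i.e. $v(tp_\pv{w}(\mathit{lastIt}_\pv{w})) \leq x$, contradicting the strict upper bound; this contradiction establishes the desired iteration and proves (B1).

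The main obstacle I expect is the (B1) case: converting an existence claim into a form amenable to the purely universal bounded induction scheme. The crux is the inductive step, where ruling out the ``increment at value $x$'' possibility via the no-crossing assumption is exactly what forces the contrapositive formulation, and where the interplay between the strict bound ($<$) and the non-strict bound ($\leq$), together with the successor arithmetic of $\natsort$ and the order on $\intsort$, must be tracked carefully.
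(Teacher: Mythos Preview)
Your proposal is correct and follows essentially the same route as the paper: (A1) is handled identically, (B1) is proved by contraposition using bounded induction with $P(it):=v(tp_\pv{w}(it))\leq x$ on $[\zero,\mathit{lastIt}_\pv{w}]$ exactly as in the paper, and (B2) uses bounded induction on $[\suc(it^1),\mathit{lastIt}_\pv{w}]$ as the paper does. The only cosmetic difference is that for (B2) the paper takes the invariant $P(it):=v(tp_\pv{w}(it^1))<v(tp_\pv{w}(it))$ (strict, anchored at $it^1$) whereas you take $P(it):=v(tp_\pv{w}(\suc(it^1)))\leq v(tp_\pv{w}(it))$ (non-strict, anchored at $\suc(it^1)$); both work and yield the same conclusion after combining with the increment hypothesis.
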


\section{Implementation and Experiments}\label{sec:implementation}

\newcommand{\confBase}{\rapid{$^-$}}
\newcommand{\confAdv}{\rapid{$^*$}}

\subsection{Implementation}

We implemented our approach in the \rapid\ tool, written in C++ and available at \url{https://github.com/gleiss/rapid}.

\rapid{} takes as input a program in the while-language $\whilelang$ together with a property expressed in trace logic $\tracelogic$ using the \smtlib{} syntax~\cite{barrett2017smtlib}. 
\rapid\ outputs 
(i) the program semantics as in Section~\ref{sec:axiomatic-semantics}, 
(ii) instantiations of trace lemmas for each mutable variable and for each loop of the program, as discussed in Section~\ref{sec:TL}, and 
(iii) the safety property, expressed 
in trace logic $\tracelogic$ and encoded in the \smtlib{} syntax.

For establishing safety, we pass the generated reasoning task to the first-order theorem prover \vampire{}~\cite{kovacs2013first} to prove the safety property from the program semantics and the instantiated trace lemmas\footnote{We also established the soundness of each trace lemma instance separately by running additional validity queries with \vampire{}.}, as discussed in Section~\ref{sec:safety}.
\vampire{} searches for a proof by refuting the negation of the property based on saturation of a set of clauses with respect to a set of inference rules such as resolution and superposition.

In our experiments, we use a custom version\footnote{\url{https://github.com/vprover/vampire/tree/gleiss-rapid}} of \vampire{} with a timeout of 60 seconds, in two different configurations. 
On the one hand, we use a configuration \textsc{\confBase}, where we tune \vampire{} to the trace logic domain using (i) existing options and (ii) domain-specific implementation to guide the high-level proof search.
On the other hand, we use a configuration \textsc{\confAdv}, which extends \textsc{\confBase} with recent techniques from~\cite{gleiss2020subsumption, gleiss2020layered} improving theory reasoning in equational theories. As such, \textsc{\confAdv} represents the result of a fundamental effort to improve \vampire{}'s reasoning for software verification. In particular, theory split queues~\cite{gleiss2020layered} present a partial solution to the prevalent challenge of combining quantification and \emph{light-weight} theory reasoning,  
drastically improving first-order reasoning in applications of software verification, as shown next.

\subsection{Experimental Results}\label{sec:experiments}

\begin{table*}[]
	\label{table:results}
	\caption{Experimental results}
	\begin{minipage}{.34\linewidth}
		\centering
		\begin{tabular}{lcc|}
			\texttt{\textbf{Benchmark} }                   & \texttt{\textbf{\confBase}} & \texttt{\textbf{\confAdv}} \\
			\hline
			atleast\_one\_iteration\_0                     & $\checkmark$                                               & $\checkmark$                                                                 \\
			atleast\_one\_iteration\_1                     & $\checkmark$                                               & $\checkmark$                                                               \\
			find\_sentinel                                 & $\checkmark$                                               & $\checkmark$                                                                \\
			find1\_0                                       & -                                                          & $\checkmark$                                                          \\
			find1\_1                                       & -                                                          & $\checkmark$                                                         \\
			find2\_0                                       & -                                                          & $\checkmark$                                                                                      \\
			find2\_1                                       & $\checkmark$                                               & $\checkmark$                                                                                      \\
			indexn\_is\_arraylength\_0                     & $\checkmark$                                               & $\checkmark$                                                                                      \\
			indexn\_is\_arraylength\_1                     & -                                                          & $\checkmark$                                                                                      \\
			set\_to\_one                                   & $\checkmark$                                               & $\checkmark$                                                                                      \\
			str\_cpy\_3                                    & $\checkmark$                                               & $\checkmark$                                                                                      \\
			\hdashline
			both\_or\_none                                 & -                                                          & $\checkmark$                                                                                      \\
			check\_equal\_set\_flag\_1                     & -                                                          & $\checkmark$                                                                                      \\
			collect\_indices\_eq\_val\_0             & -                                                          & $\checkmark$                                                                                      \\
			collect\_indices\_eq\_val\_1             & -                                                          & $\checkmark$                                                                                      \\
			copy                                           & -                                                          & $\checkmark$                                                                                      \\
			copy\_absolute\_0                              & -                                                          & $\checkmark$                                                                                      \\
			copy\_absolute\_1                              & -                                                          & $\checkmark$                                                                                      \\
			copy\_nonzero\_0                               & -                                                          & $\checkmark$                                                                                      \\
			copy\_partial                                  & -                                                          & $\checkmark$                                                                                      \\
			copy\_positive\_0                              & -                                                          & $\checkmark$                                                                                      \\
			copy\_two\_indices                             & -                                                          & $\checkmark$                                                                                      \\
			find\_max\_0                                   & -                                                          & $\checkmark$                                                                                      \\
			find\_max\_2                                   & -                                                          & $\checkmark$                                                                                      \\
			find\_max\_from\_second\_0                     & -                                                          & -                                                                                                 \\
			find\_max\_local\_2                            & -                                                          & -                                                                                                 \\
			find\_max\_up\_to\_0                           & -                                                          & -                                                                                                 \\
			find\_max\_up\_to\_2                           & -                                                          & -                                                                                                 \\
			find\_min\_0                                   & -                                                          & $\checkmark$                                                                                      \\
			find\_min\_2                                   & -                                                          & $\checkmark$                                                                                      \\
			find\_min\_local\_2                            & -                                                          & -                                                                                                 \\
			find\_min\_up\_to\_0                           & -                                                          & -                                                                                                 \\
			find\_min\_up\_to\_2                           & -                                                          & -                                                                                                 \\
			find1\_4                                       & -                                                          & $\checkmark$                                                                                      \\
			find2\_4                                       & $\checkmark$                                               & $\checkmark$                                                                                      \\
		\end{tabular}
	\end{minipage}%
	\begin{minipage}{.34\linewidth}
		\centering
		\begin{tabular}{lcc|}
			\texttt{\textbf{Benchmark} }                   & \texttt{\textbf{\confBase}} & \texttt{\textbf{\confAdv}} \\
			\hline
			in\_place\_max                                 & -                                                          & $\checkmark$                                                                                      \\
			inc\_by\_one\_0                          & -                                                          & $\checkmark$                                                                                      \\
			inc\_by\_one\_1                          & -                                                          & $\checkmark$                                                                                      \\
			inc\_by\_one\_harder\_0                  & -                                                          & $\checkmark$                                                                                      \\
			inc\_by\_one\_harder\_1                  & -                                                          & $\checkmark$                                                                                      \\
			init                                           & -                                                          & $\checkmark$                                                                                      \\
			init\_conditionally\_0                         & -                                                          & $\checkmark$                                                                                      \\
			init\_conditionally\_1                         & -                                                          & $\checkmark$                                                                                      \\
			init\_non\_constant\_0                         & -                                                          & $\checkmark$                                                                                      \\
			init\_non\_constant\_1                         & -                                                          & $\checkmark$                                                                                      \\
			init\_non\_constant\_2                         & -                                                          & $\checkmark$                                                                                      \\
			init\_non\_constant\_3                         & -                                                          & $\checkmark$                                                                                      \\
			init\_non\_constant\_easy\_0                   & -                                                          & $\checkmark$                                                                                      \\
			init\_non\_constant\_easy\_1                   & -                                                          & $\checkmark$                                                                                      \\
			init\_non\_constant\_easy\_2                   & -                                                          & $\checkmark$                                                                                      \\
			init\_non\_constant\_easy\_3                   & -                                                          & $\checkmark$                                                                                      \\
			init\_partial                                  & -                                                          & $\checkmark$                                                                                      \\
			init\_prev\_plus\_one\_0                   & -                                                          & $\checkmark$                                                                                      \\
			init\_prev\_plus\_one\_1                   & -                                                          & $\checkmark$                                                                                      \\
			init\_prev\_plus\_one\_alt\_0      & -                                                          & $\checkmark$                                                                                      \\
			init\_prev\_plus\_one\_alt\_1      & -                                                          & $\checkmark$                                                                                      \\
			max\_prop\_0                                   & -                                                          & $\checkmark$                                                                                      \\
			max\_prop\_1                                   & -                                                          & $\checkmark$                                                                                      \\
			merge\_interleave\_0                           & -                                                          & -                                                                                                 \\
			merge\_interleave\_1                           & -                                                          & -                                                                                                 \\
			min\_prop\_0                                   & -                                                          & $\checkmark$                                                                                      \\
			min\_prop\_1                                   & -                                                          & $\checkmark$                                                                                      \\
			partition\_0                                   & -                                                          & $\checkmark$                                                                                      \\
			partition\_1                                   & -                                                          & $\checkmark$                                                                                      \\
			push\_back                                     & -                                                          & $\checkmark$                                                                                      \\
			reverse                                        & -                                                          & $\checkmark$                                                                                      \\
			str\_cpy\_0                                    & -                                                          & $\checkmark$                                                                                      \\
			str\_cpy\_1                                    & -                                                          & $\checkmark$                                                                                      \\
			str\_cpy\_2                                    & $\checkmark$                                               & $\checkmark$                                                                                      \\	
			swap\_0                                        & -                                                          & $\checkmark$                                                                                      \\
		\end{tabular}
	\end{minipage} 
	\begin{minipage}{.33\linewidth}
		\centering
		\begin{tabular}{lcc}
			\texttt{\textbf{Benchmark} }                   & \texttt{\textbf{\confBase}} & \texttt{\textbf{\confAdv}} \\
			\hline		
			swap\_1                                        & -                                                          & $\checkmark$                                                                                      \\
			vector\_addition                               & -                                                          & $\checkmark$                                                                                      \\
			vector\_subtraction                            & -                                                          & $\checkmark$                                                                                      \\
			\hdashline
			check\_equal\_set\_flag\_0                     & $\checkmark$                                               & $\checkmark$                                                                                      \\
			find\_max\_1                                   & -                                                          & -                                                                                                 \\
			find\_max\_from\_second\_1                     & -                                                          & -                                                                                                 \\
			find1\_2                                       & $\checkmark$                                               & $\checkmark$                                                                                      \\
			find1\_3                                       & $\checkmark$                                               & $\checkmark$                                                                                      \\
			find2\_2                                       & $\checkmark$                                               & $\checkmark$                                                                                      \\
			find2\_3                                       & $\checkmark$                                               & $\checkmark$                                                                                      \\
			\hdashline
			collect\_indices\_eq\_val\_2             & -                                                          & $\checkmark$                                                                                      \\
			collect\_indices\_eq\_val\_3             & -                                                          & -                                                                                                 \\
			copy\_nonzero\_1                               & -                                                          & $\checkmark$                                                                                      \\
			copy\_positive\_1                              & -                                                          & $\checkmark$                                                                                      \\
			find\_max\_local\_0                            & -                                                          & -                                                                                                 \\
			find\_max\_local\_1                            & -                                                          & -                                                                                                 \\
			find\_max\_up\_to\_1                           & -                                                          & -                                                                                                 \\
			find\_min\_1                                   & -                                                          & -                                                                                                 \\
			find\_min\_local\_0                            & -                                                          & -                                                                                                 \\
			find\_min\_local\_1                            & -                                                          & -                                                                                                 \\
			find\_min\_up\_to\_1                           & -                                                          & -                                                                                                 \\
			merge\_interleave\_2                           & -                                                          & -                                                                                                 \\
			partition\_2                                   & -                                                          & $\checkmark$                                                                                      \\
			partition\_3                                   & -                                                          & $\checkmark$                                                                                      \\
			partition\_4                                   & -                                                          & -                                                                                                 \\
			partition\_5                                   & -                                                          & $\checkmark$                                                                                      \\
			partition\_6                                   & -                                                          & -                                                                                                 \\
			partition-harder\_0                            & -                                                          & $\checkmark$                                                                                      \\
			partition-harder\_1                            & -                                                          & $\checkmark$                                                                                      \\
			partition-harder\_2                            & -                                                          & -                                                                                                 \\
			partition-harder\_3                            & -                                                          & -                                                                                                 \\
			partition-harder\_4                            & -                                                          & -                                                                                                 \\
			str\_len                                       & $\checkmark$                                               & $\checkmark$                                                                                      \\\\
			\hline
			\textbf{Total solved}                          & \textbf{15}                                                & \textbf{78}           
		\end{tabular}
	\end{minipage}

\end{table*}

We considered challenging Java- and C-like verification benchmarks from the SV-Comp repository \cite{beyer2019automatic}, containing the combination of loops and arrays.
We omitted those examples for which the task is to find bugs in form of counterexample traces, as well as those examples that cannot be expressed in our programming model $\whilelang$, such as examples with explicit memory management.
In order to improve the set of benchmarks, we also included additional challenging programs and functional properties.
As a result, we obtained benchmarks ranging over 45 unique programs with a total of 103 tested properties. Our benchmarks are available in the \rapid\ repository\footnote{\url{https://github.com/gleiss/rapid/tree/master/examples/arrays}}. 

We manually transformed those benchmarks into our input format. SV-Comp benchmarks encode properties featuring universal quantification by extending the corresponding program with an additional loop containing a standard C-like assertion.
For instance, the property 
$$\forall i_{\intsort}.\ 0 \leq i < a.length \rightarrow P(a(i, end))$$
would be encoded by extending the program with a loop
\begin{align*}
	&\texttt{for(int i = 0; i < a.length; i++)} \\
	&\texttt{assert(P(a[i]))}
\end{align*}
While this encoding loses explicit structure and results in a harder reasoning task, it is necessary as other tools do not support explicit universal quantification in their input language.
In contrast, our approach can handle arbitrarily quantified properties over unbounded data structures. We, thus, directly formulate universally quantified properties, without using any program transformations.

The results of our experiments are presented in Table 1. 
We divided the results in four segments in the following order: the first eleven problems are quantifier-free, the largest part of 62 problems are universally quantified, seven problems are existentially quantified, while the last 23 problems contain quantifier alternations.  
First, we are interested in the overall number of problems we are able to prove correct. 
In the configuration \confAdv{}, which represents our main configuration, \vampire{} is able to prove 78 out of 103 encodings. In particular, we verify Figure~\ref{fig:running}, corresponding to benchmark {\tt copy\_positive\_1}, as well as other challenging properties that involve quantifier alternations, such as {\tt partition\_5}.

Second, we are interested in comparing the results for configurations \confBase{} and \confAdv{}, in order to understand the importance of recently developed techniques from \cite{gleiss2020subsumption} and \cite{gleiss2020layered} for reasoning in the trace logic domain. While \confBase{} is only able to prove 15 out of 103 properties, \confAdv{} is able to prove 78 properties, that is, \confAdv{} improves over \confBase{} by 63 examples. 
Moreover, only \confAdv{} is able to prove advanced properties involving quantifier alternations. We therefore see that \confAdv{} drastically outperforms \confBase{}, suggesting that the recently developed techniques are essential for efficient reasoning in trace logic.

Third, we are interested in what kinds of properties \rapid\ can prove. It comes with no surprise that all quantifier-free instances could be proved. Out of 62 universally quantified properties, \rapid\ could establish correctness of 53 such properties.
More interestingly, \rapid\ proves 14 out of 30 benchmarks containing either existentially quantified properties or such with quantifier alternations. The benchmarks that could not be solved by \rapid\ are primarily universally and alternatingly quantified properties that need additional  trace lemmas relating values of multiple program variables. 

\textbf{Comparing with other tools.}  
We compare our work against other approaches in~\ref{sec:related}. Here,  we omit a direct comparison of \rapid\ with other tools for the following reasons: \\
(1) Our benchmark suite includes 62 universally quantified and 11 non-quantified properties that could technically be supported by state-of-the-art tools such as \spacer/\seahorn\ and \freqhorn. Our benchmarks, however, also include 30 benchmarks with existential (7 examples) and alternating quantification (23 examples) that these tools cannot handle. As these examples depend on invariants that are alternatingly or at least existentially quantified, we believe these other tools cannot solve these benchmarks, while \confAdv{} could solve 14 examples in this domain. \\
(2) In our preliminary work~\cite{barthe2019verifying}, we already compared our reasoning within \rapid\ against \z\ and \cvc. These experiments showed that due to the fundamental difference in handling variables as functions over timepoints in our semantics, \rapid\ outperformed SMT-based reasoning approaches. \\
(3) Our program semantics is different than the one used in Horn clause verification techniques. 

Concerning previous approaches with first-order reasoners, the benchmarks of~\cite{gleiss2018loop} represent a subset of 55 examples from our current benchmark suite: only 21 examples from our benchmark suite could be proved by~\cite{gleiss2018loop}. For instance, our example in Figure \ref{fig:running} could not be proven in \cite{gleiss2018loop}. We believe that our work can be combined with approaches from~\cite{kovacs2009finding, gleiss2018loop} to  non-trivial invariants and loop bounds from saturation-based proof search. Our work can, thus, complement existing tools in proving complex quantified properties.

\section{Related Work}\label{sec:related}
Our work is closely related to recent efforts in using first-order theorem provers for proving software properties~\cite{kovacs2009finding, gleiss2018loop}. While~\cite{gleiss2018loop} captures programs semantics
in the first-order language of extended expressions over loop iterations, in our work we further generalize the semantics of program locations and consider program expressions over loop iterations and arbitrary timepoints. Further, we introduce and prove trace lemmas to automate inductive reasoning based on bounded induction over loop iterations. Our generalizations in trace logic proved to be necessary to automate the verification of properties with arbitrary quantification, which could not be effectively achieved in~\cite{gleiss2018loop}. Our work is not restricted to reasoning about single loops as in~\cite{gleiss2018loop}. 

Compared to~\cite{barthe2019verifying}, we provide a non-recursive generalization of the axiomatic semantics of programs in trace logic, prove completeness of our axiomatization in trace logic, ensure soundness of our trace lemmas and use trace logic for safety verification.

In comparison to verification approaches based on program transformations~\cite{kobayashi2020fold, chakraborty2020verifying, yang2019lemma}, we do not require user-provided functions to transform program states to smaller-sized states~\cite{ish2020putting}, nor are we restricted to universal properties generated by symbolic executions~\cite{chakraborty2020verifying}. Rather, we use only three trace lemmas that we prove sound and automate the verification of first-order properties, possibly with alternations of quantifiers.

The works~\cite{Dillig10,Cousot11} consider
expressive abstract domains and limit the generation of universal
invariants to these domains, while supporting potentially more generic program
grammars than our $\whilelang$ language. Our work however can verify universal and/or existential first-order properties with theories, which is not the case in~\cite{kobayashi2020fold,chakraborty2020verifying,Dillig10,Cousot11}.  
Verifying universal loop properties with arrays by implicitly finding invariants is addressed in~\cite{gurfinkel2018quantifiers, fedyukovich2019quantified, komuravelli2015compositional, fedyukovich2017sampling, fedyukovich2018accelerating, matsushita2020rusthorn}, and
by using constraint horn clause reasoning within property-driven reachability analysis in~\cite{hoder2012generalized, cimatti2012software}. 

Another line of research proposes abstraction and lazy interpolation~\cite{alberti2012lazy, afzal2020veriabs}, as well as recurrence solving with SMT-based reasoning~\cite{rajkhowa2018extending}. Synthesis-based approaches, such as \cite{fedyukovich2019quantified}, are shown to be successful when it comes to inferring universally quantified invariants and proving program correctness from these invariants. 
Synthesis-based term enumeration is used also in~\cite{yang2019lemma} in combination with user-provided invariant templates. 
Compared to these works, we do not consider programs only as a sequence of states, but model program values as functions of loop iterations and timepoints.
We synthesize bounds on loop iterations and infer first-order loop invariants as logical consequences of our trace lemmas and program semantics in trace logic.

\section{Conclusion}

We introduced trace logic to reason about safety loop properties over arrays. Trace logic supports explicit timepoint reasoning to allow arbitrary quantification over loop iterations. We use trace lemmas as consequences of bounded induction to automated inductive loop reasoning in trace logic.
We formalize the axiomatic semantics of programs in trace logic and prove it to be both sound and complete.
We report on our implementation in the \rapid{} framework, allowing us to use superposition-based reasoning in trace logic for verifying challenging  verification examples. 
Generalizing our work to termination analysis and extending our programming language, and its semantics in trace logic, with more complex constructs are interesting tasks for future work.

\vspace{1em}

\textit{Acknowledgements.}
This work was funded by the ERC Starting Grant 2014 SYMCAR 639270, the ERC Proof of Concept Grant 2018 SYMELS 842066, the Wallenberg Academy Fellowship 2014 TheProSE, and the Austrian FWF research project W1255-N23.

\bibliographystyle{IEEEtranN}
\bibliography{IEEEabrv, references}

\newpage
\appendices

\section{Small-step operational semantics} \label{sec:sos-semantics}

Here, we give a small-step operational semantics of $\whilelang$. 
Our presentation is semantically equivalent to standard small-step operational semantics, but differs syntactically in several points, in order to simplify later definitions and theorems:
(i) we annotate while-statements with counters to ensure the uniqueness of timepoints during the execution,
(ii) we reference nodes in the program-tree to keep track of the
current location during the execution instead of using strings to
denote the remaining program, 
(iii) we avoid additional constructs like states or configurations, 
(iv) we keep the timepoints in the execution separated from the values of the program variables at these timepoints, and
(v) we evaluate expressions on the fly.

We start by formalizing single steps of the execution of the program as transition rules, as defined in Figure~\ref{fig:operational-semantics}.
Intuitively, the rules describe 
(i) how we move the location-pointer around on the program-tree and 
(ii) how the state changes while moving the location-pointer around.
Each rule consists of 
(i) a premise $\mathit{Reach}(tp_1)$ for some timepoint $tp_1$, 
(ii) an additional premise $F$ (omitted if $F$ is $\top$), the so-called \emph{side-condition}, which is an arbitrary trace-logic formula referencing only the timepoint $tp_1$, 
(iii) the first conjunct of the conclusion of the form $\mathit{Reach}(tp_2)$ for some timepoint $tp_2$, and 
(iv) the second conjunct of the conclusion, which again is an arbitrary trace-logic formula $G$ referencing only the timepoints $tp_1$ and $tp_2$.

Next, we formalize the possible executions of the program as a set of first-order interpretations, so-called \emph{execution interpretations}.
In a nutshell, execution interpretations can be described as follows.
Each possible execution of the program induces an interpretation. 
For each such execution, the predicate symbol $\mathit{Reach}$ is interpreted as the set of timepoints which are reached during the execution. 
The function symbols denoting values of program variables are interpreted according to the transition rules at the timepoints which are reached during the execution, and are interpreted arbitrarily at all other timepoints.

We construct execution interpretation iteratively, as follows: 
We move around the program as defined by the transition rules. 
Whenever we reach a new timepoint, we choose a program state $J'$,
such that the side-conditions of the transition rule are fulfilled,
and extend the current interpretation $J$ with $J'$. We furthermore collect all timepoints that we already reached in $I$. We stop as soon as we reach $\mathit{end}$. We then construct an execution interpretation as follows: we interpret $Reach$ as $I$, extend $J$ to an interpretation of $\pvsig$ by choosing an arbitrary state at any timepoint which we did not reach, and choose an arbitrary interpretation of the theory symbols according to the background theory.

\begin{definition}[Program state]
	A \emph{program state at timepoint $tp$} is a partial interpretation, which exactly contains (i) for each non-array variable $\pv{v}$ an interpretation of $v(tp)$ and (ii) for each array variable $\pv{a}$ and for each element $pos$ of the domain $\intsig$ an interpretation of $a(tp,pos)$.
\end{definition}

\begin{definition}[Execution interpretation]
	\label{def:execution-interpretation} 
		Let $p_0$ be a fixed program.
		Let $I,J$ be any possible result returned by the algorithm in Algorithm~\ref{alg:execution-interpretation}.
		Let $M$ be any interpretation, such that 
		(i) $\mathit{Reach}(tp)$ is true iff $tp \in I$, 
		(ii) $M$ is an extension of $J$, and 
	       (iii) $M$ interprets the symbols of the background theory according to the theory. 
		Then $M$ is called an \emph{execution interpretation of $p_0$}.
\end{definition}

\begin{figure}[]
	\begin{prooftree}
		\AxiomC{}
		\LeftLabel{[\sos{init}]}
		\UnaryInfC{$\mathit{Reach}(\mathit{start})$}
	\end{prooftree}

	Let $s$ be a \while{skip}.
	\begin{prooftree}
		\AxiomC{$\mathit{Reach}(\mathit{start}_{s})$}
		\LeftLabel{[$\mathit{skip}^{sos}$]}
		\UnaryInfC{$\mathit{Reach}(\mathit{end}_{s}) \land \mathit{EqAll}(\mathit{start}_{s}, \mathit{end}_{s})$ }
	\end{prooftree}

	Let $s$ be an assignment \while{v = e}.
	\begin{prooftree}
		\AxiomC{$\mathit{Reach}(\mathit{start}_{s})$}
		\LeftLabel{[\sos{asg}]}
		\UnaryInfC{$\mathit{Reach}(\mathit{end}_{s}) \land \mathit{Update}(v,e,\mathit{start}_{s}, \mathit{end}_{s})$ }
	\end{prooftree}

	Let $s$ be an array-assignment \while{v[e$_1$] = e$_2$}.
	\begin{prooftree}
		\AxiomC{$\mathit{Reach}(\mathit{start}_{s})$}
		\LeftLabel{[\sos{asg_{arr}}]}
		\UnaryInfC{$\mathit{Reach}(\mathit{end}_{s}) \land \mathit{UpdateArr}(v,e_1,e_2,\mathit{start}_{s}, \mathit{end}_{s})$ }
	\end{prooftree}
	
	Let $\pv{s}$ be \while{if(Cond)\{c}$_1$\while{\}else\{c}$_2$\while{\}}.
	\begin{prooftree}
		\AxiomC{$\mathit{Reach}(\mathit{start}_{s})$}
		\AxiomC{$ \llbracket \mathit{Cond} \rrbracket (\mathit{start}_{s}) $}
		\LeftLabel{[\sos{ite_T}]}
		\BinaryInfC{$\mathit{Reach}(\mathit{start}_{c_1}) \land \mathit{EqAll}(\mathit{start}_{s}, \mathit{start}_{c_1})$}
	\end{prooftree}
	\begin{prooftree}
		\AxiomC{$\mathit{Reach}(\mathit{start}_{s})$}
		\AxiomC{$ \neg \llbracket \mathit{Cond} \rrbracket (\mathit{start}_{s}) $}
		\LeftLabel{[\sos{ite_F}]}
		\BinaryInfC{$\mathit{Reach}(\mathit{start}_{c_2}) \land \mathit{EqAll}(\mathit{start}_{s}, \mathit{start}_{c_2})$}
	\end{prooftree}

	Let $\pv{s}$ be \while{while(Cond)\{c\}}.
	\begin{prooftree}
		\AxiomC{$\mathit{Reach}(tp_{s}(it^{s}))$}
		\AxiomC{$\llbracket \mathit{Cond} \rrbracket (tp_{s}(it^{s})) $}
		\LeftLabel{[\sos{while_T}]}
		\BinaryInfC{$\mathit{Reach}(\mathit{start}_{c}) \land \mathit{EqAll}(tp_{s}(it^{s}), \mathit{start}_{c})$}
	\end{prooftree}
	\begin{prooftree}
		\AxiomC{$\mathit{Reach}(tp_{s}(it^{s}))$}
		\AxiomC{$ \neg \llbracket \mathit{Cond} \rrbracket (tp_{s}(it^{s})) $}
		\LeftLabel{[\sos{while_F}]}
		\BinaryInfC{$\mathit{Reach}(\mathit{end}_{s}) \land \mathit{EqAll}(tp_{s}(it^{s}), \mathit{end}_s)$}
	\end{prooftree}
	\caption{Small-step operational semantics using $\mathit{tp}$, $\mathit{start}$, $\mathit{end}$.}
	\label{fig:operational-semantics}
\end{figure}

\begin{algorithm}
	\caption{Algorithm to compute execution interpretation.}
	\label{alg:execution-interpretation}
	\begin{algorithmic}
		\State $\mathit{curr} = \mathit{start}$
		\State $I = \{ curr \}$
		\State $J = $ choose program state at $\mathit{curr}$
		\While{$\mathit{curr} \neq \mathit{end}$}
			\State choose $r :=$
			\AxiomC{$\sigma\mathit{Reach}(\mathit{curr})$}
			\AxiomC{$\sigma F$}
			\BinaryInfC{$\sigma\mathit{Reach}(\mathit{next}) \land \sigma G$}\DisplayProof, with $J \vDash \sigma F$
			\If{$r$ is [\sos{while_F}] for some statement $s$}
				\State $J = J \cup \{ \sigma\mathit{lastIt}_s \mapsto \sigma it^s\}$
			\EndIf
			\State choose a program state $J'$ such that $J \cup J' \vDash \sigma G$.
			\State $J = J \cup J'$
			\State $I = I \cup \{\mathit{next}\}$
			\State $\mathit{curr} = \mathit{next}$ 
		\EndWhile
		\State\Return $I, J$
	\end{algorithmic}
\end{algorithm}

With the definition of execution interpretations at hand, we are now able to define the valid properties of a program as the properties which hold in each execution interpretation.
\begin{definition}
Let $p_0$ be a fixed program. Let $F$ be a trace logic formula. Then $F$ is called \emph{valid} with respect to $p_0$, if $F$ holds in each execution interpretation of $p_0$.
\end{definition}

We conclude this subsection with stating simple properties of executions. 
The (i) first property states that whenever we reach the start of the execution of a subprogram \pv{p}, we also reach the end of the execution of \pv{p}. 
The (ii) second property states that whenever we reach the start of the execution of a context \pv{c}, we also reach the start of the execution of each statement occurring in \pv{c}. 
The (iii) third property states that whenever we reach the start of the execution of a \whileStatement-statement \pv{s}, then 
(a) we also reach the loop-condition check of \pv{s} in each iteration up to and including the last iteration, and 
(b) we also reach the start of the execution of the context of the loop body of \pv{s} in each iteration before the last iteration.
Formally, we have the following result.

\begin{lemma}\label{lemma:reach}
	Let $\pv{p}_0$ be a fixed program and let $M$ be an execution interpretation of $\pv{p}_0$. Let further $\pv{p}$ be an arbitrary subprogram of $\pv{p}_0$ and $\sigma$ be an arbitrary grounding of the enclosing iterations of $\pv{p}$ such that $\sigma\mathit{Reach}(\mathit{start}_p)$ holds. Then:
	\begin{enumerate}
		\item \label{lemma:reach1} $\sigma\mathit{Reach}(\mathit{end}_p)$ holds in $M$.
		\item \label{lemma:reach-context} 
		If \pv{p} is a context, $\sigma\mathit{Reach}(\mathit{start}_{s_i})$ holds in $M$ for any statement \pv{s}\pvi{i} occurring in \pv{p}.
		\item \label{lemma:reach2} If \pv{p} is a \whileStatement-statement \while{while(Cond)\{c\}}, then
		\begin{enumerate}[label={\alph*.}]
			\item $\sigma \mathit{Reach}(tp_p(it^p))$ holds in $M$ for any iteration $it^p \leq \sigma(\mathit{lastIt}_p)$.\label{lemma:reach2a}
			\item $\sigma\sigma' \mathit{Reach}(start_c)$ holds in $M$ for any iteration with $\sigma'it^p<\sigma(\mathit{lastIt}_p)$, where $\sigma'$ is any grounding of $it^p$.\label{lemma:reach2b}
		\end{enumerate}
	\end{enumerate}
\end{lemma}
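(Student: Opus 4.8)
The plan is to prove the three properties simultaneously by structural induction on the subprogram $\pv{p}$, reading off $\mathit{Reach}$ from the set $I$ returned by Algorithm~\ref{alg:execution-interpretation} and driving every step with the transition rules of Figure~\ref{fig:operational-semantics}. Throughout I write $\pv{s}$ for the current subprogram, matching the rule labels and the notation $\mathit{start}_s,\mathit{end}_s,tp_s,\mathit{lastIt}_s,it^s$; in particular, in the \whileStatement-case this $\pv{s}$ is the $\pv{p}$ of property~\ref{lemma:reach2}. Since $M$ is an execution interpretation, $\sigma\mathit{Reach}(\mathit{start}_s)$ means $\sigma\mathit{start}_s\in I$, and the conclusions reduce to membership of further ground timepoints in $I$.

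First I would record a \emph{successor fact}. The set $I$ is generated by starting at $\mathit{start}$ and repeatedly moving $\mathit{curr}$ to $\mathit{next}$ via an applicable rule until $\mathit{end}$ is reached; hence every $tp\in I$ with $tp\neq\mathit{end}$ is the premise-timepoint of an applicable transition-rule instance whose side-condition holds in $J$, and therefore in $M$ (as $M$ extends $J$), with the rule's conclusion-timepoint again lying in $I$. For branching and loop statements the two candidate rules carry complementary side-conditions, so the applicable one is fixed by whether $\mathit{Cond}$ holds in $M$. This reduces each case to ``apply the matching rule, invoke the induction hypothesis, and continue through the structure of $\pv{s}$ until its $\mathit{end}$.''

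For the base cases --- \pv{skip} and integer/array assignments --- property~\ref{lemma:reach1} is immediate, since the rules \sos{skip}, \sos{asg}, \sos{asg_{arr}} have conclusion $\mathit{Reach}(\mathit{end}_s)$. For a context $\pv{c}=\pv{s}_1;\dots;\pv{s}_k$ I use $\mathit{start}_c=\mathit{start}_{s_1}$, $\mathit{end}_{s_i}=\mathit{start}_{s_{i+1}}$ for $i<k$, and $\mathit{end}_{s_k}=\mathit{end}_c$, chaining property~\ref{lemma:reach1} (the hypothesis) along the statements; this simultaneously yields property~\ref{lemma:reach-context} (each $\mathit{start}_{s_i}$ is reached) and, at the last statement, property~\ref{lemma:reach1} for $\pv{c}$. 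For an \ifThenElseStatement{} statement, the side-condition at $\mathit{start}_s$ selects \sos{ite_T} or \sos{ite_F} and leads into $\pv{c}_1$ or $\pv{c}_2$; applying the hypothesis (property~\ref{lemma:reach1}) to the chosen branch and using $\mathit{end}_{c_1}=\mathit{end}_{c_2}=\mathit{end}_s$ gives property~\ref{lemma:reach1}.

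The \whileStatement-statement $\pv{s}$ with body context $\pv{c}$ is where the argument concentrates, and I expect it to be the main obstacle. Here I would run a nested induction on the iteration index $i$ from $0$ to $\sigma(\mathit{lastIt}_s)$, establishing $\sigma tp_s(i)\in I$ (property~\ref{lemma:reach2a}) and, for $i<\sigma(\mathit{lastIt}_s)$, reachability of $\mathit{start}_c$ under the extended grounding $\sigma\cup\{it^s\mapsto i\}$ (property~\ref{lemma:reach2b}). The base case $i=0$ is the hypothesis, since $\mathit{start}_s=tp_s(\zero)$. For the step I must know that \sos{while_T}, not \sos{while_F}, fires at every $i<\sigma(\mathit{lastIt}_s)$; this is the delicate point, and it holds because the algorithm sets $\mathit{lastIt}_s$ to exactly the iteration at which \sos{while_F} fires --- the first failure of $\mathit{Cond}$, which is well defined as we assume termination --- so an earlier failure would force $\sigma(\mathit{lastIt}_s)\le i$, a contradiction. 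Granting this, \sos{while_T} yields $\mathit{start}_c\in I$, and the hypothesis (property~\ref{lemma:reach1}) applied to $\pv{c}$ together with $\mathit{end}_c=tp_s(\suc(it^s))$ instantiated at $it^s=i$ gives $\sigma tp_s(\suc(i))\in I$, closing the step. Finally, at $i=\sigma(\mathit{lastIt}_s)$ rule \sos{while_F} fires and produces $\mathit{Reach}(\mathit{end}_s)$, which is property~\ref{lemma:reach1} for $\pv{s}$. What remains is routine bookkeeping with the definitions of $\mathit{start}$ and $\mathit{end}$ and with threading the grounding through nested loops.
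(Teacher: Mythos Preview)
Your proposal is correct and follows essentially the same route as the paper: structural induction on the subprogram with an inner numerical induction over iterations in the \whileStatement{} case, chaining the induction hypothesis through contexts and branches via the definitions of $\mathit{start}$ and $\mathit{end}$. If anything, your treatment is slightly more careful than the paper's, since you explicitly justify why \sos{while_T} rather than \sos{while_F} fires at iterations strictly before $\sigma(\mathit{lastIt}_s)$ by appealing to how the algorithm fixes $\mathit{lastIt}_s$; the paper's proof simply asserts that \sos{while_T} applies there.
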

\begin{proof}
We prove all three properties using a single induction proof.
  We proceed by structural induction over the program structure with the induction hypothesis $$\forall \mathit{enclIts}. \big(\mathit{Reach}(\mathit{start}_p) \limp \mathit{Reach}(\mathit{end}_p)\big).$$
	Let $\pv{p}$ be an arbitrary subprogram of $\pv{p}_0$. For an arbitrary grounding $\sigma$ of the enclosing iterations assume that $\sigma \mathit{Reach}(\mathit{start}_p)$ holds in $M$. 
	In order to show that $\sigma \mathit{Reach}(\mathit{end}_p)$ holds in $M$, 
	we perform a case distinction on the type of $\pv{p}$:
	\begin{itemize}
		\item Assume $\pv{p}$ is \while{skip}, or an integer- or array-assignment: Since $\sigma \mathit{Reach}(\mathit{start}_p)$ holds in $M$, the rule \sos{skip} resp. \sos{asg} resp. \sos{asg_{arr}} applies, so $\sigma \mathit{Reach}(\mathit{end}_p)$ holds in $M$ too.
		\item Assume $\pv{p}$ is a context $s_1;\dots;s_k$. By definition $\mathit{start_p}=\mathit{start_{s_1}}$, therefore $\sigma \mathit{Reach}(\mathit{start}_{s_1})$ holds in $M$. By the induction hypothesis, we know that
		$\sigma \mathit{Reach}(\mathit{start}_{s_i}) \limp \sigma \mathit{Reach}(\mathit{end}_{s_i})$ holds in $M$ for any $1\leq i \leq k$. Using a trivial induction, we conclude that $\sigma \mathit{Reach}(\mathit{end}_{s_i})$ holds in $M$ for any $1 \leq i \leq k$.
		\item Assume $\pv{p}$ is \while{if(Cond)\{c$_1$\} else \{c$_2$\}}. Assume w.l.o.g. that $\sigma\llbracket \mathit{Cond} \rrbracket(\mathit{start}_p)$ holds in $M$. Then the rule \sos{ite_T} applies, so $\sigma \mathit{Reach}(\mathit{start}_{c_1})$ holds in $M$. Using the induction hypothesis, we get $\sigma \mathit{Reach}(\mathit{start}_{c_1}) \limp \sigma \mathit{Reach}(\mathit{end}_{c_1})$, so $\sigma \mathit{Reach}(\mathit{end}_{c_1})$ holds in $M$. By definition, $\mathit{end}_c=\mathit{end}_p$, so $\sigma\mathit{Reach(\mathit{end}_p)}$ holds in $M$.
		\item Assume $\pv{p}$ is \while{while(Cond)\{c\}}. 
		We perform  bounded induction over $it^p$ from $\zero$ to $\sigma(\mathit{lastIt}_p)$ with the induction hypothesis $P(it^p) = \zero \leq \sigma(it^p) < \sigma(\mathit{lastIt}_p) \rightarrow \sigma\mathit{Reach}(tp_p(it^p))$. 
		
		The base case holds, since $\sigma\mathit{Reach}(\mathit{start}_p)$ is the same as $\sigma\mathit{Reach}(tp_p(\zero))$. 
		
		For the inductive case, assume that both $\sigma\sigma' \mathit{Reach}(tp_p(it^p))$ and $\sigma'(it^p) < \sigma(\mathit{lastIt}_p)$ holds for some grounding $\sigma'$ of $it^p$ with the goal of deriving $\sigma\sigma'\mathit{Reach}(tp_p(\suc(it^p)))$. Then rule \sos{while_T} applies, so $\sigma\sigma' \mathit{Reach}(\mathit{start}_c)$ holds in $M$. From the induction hypothesis, we conclude $\sigma\sigma' \mathit{Reach}(\mathit{start}_c)\limp \sigma\sigma' \mathit{Reach}(\mathit{end}_c)$, so $\sigma\sigma' \mathit{Reach}(\mathit{end}_c)$ holds. By definition, $\mathit{end}_c = \mathit{tp}_p(\suc(it^p))$, so we conclude that $\sigma\sigma' \mathit{Reach}(\mathit{tp}_p(\suc(it^p)))$ holds in $M$.
		
		We have established the base case and the inductive case, so we apply bounded induction to derive that 
		$$
		\forall it^p. 
		\big( 
			\sigma(it^p) \leq \sigma(\mathit{lastIt}_p) \limp \sigma\mathit{Reach}(tp_p(it^p))
		\big)
		$$
		holds in $M$.
		In particular,
                $\sigma\mathit{Reach}(\mathit{lastIt}_p)$ holds in
                $M$. Since by definition also $\sigma \neg \llbracket
                \mathit{Cond}\rrbracket(\mathit{lastIt}_p)$ holds in
                $M$, we deduce that \sos{while_F} applies, so
                $\sigma\mathit{Reach}(\mathit{end}_p)$ holds.
           \qed
	\end{itemize}
\end{proof}

\section{$\whilelang$-Soundness}
\label{sec:soundness}
We show that the axiomatic semantics introduced in Section~\ref{sec:axiomatic-semantics} is sound with respect to the operational semantics introduced in Appendix~\ref{sec:sos-semantics}.
Soundness is formalized as follows.

\begin{definition}[$\whilelang$-Soundness]
	Let $\pv{p}$ be a program and let $F$ be a trace logic formula.
	Then $F$ is called \emph{$\whilelang$-sound}, if for any execution interpretation $M$ we have $M \vDash F$.
\end{definition}

The following theorem states that the axioms defining the predicate $\mathit(Reach)$ are sound.
\begin{theorem}[$\whilelang$-Soundness of axioms defining $\mathit{Reach}$]
	For a given terminating program $\pv{p}_0$, the axioms defining $\mathit{Reach}$ are $\whilelang$-sound.
\end{theorem}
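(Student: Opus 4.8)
The plan is to fix an arbitrary execution interpretation $M$ of $\pv{p}_0$, obtained from some run of Algorithm~\ref{alg:execution-interpretation} returning a pair $(I,J)$, and to recall that by Definition~\ref{def:execution-interpretation} we have $M \vDash \mathit{Reach}(tp)$ exactly when $tp \in I$. Each clause in the $\mathit{Reach}$-predicate definition is really a biconditional axiom $\mathit{Reach}(T) \liff \Phi$, universally closed over the enclosing iterations, so establishing $\whilelang$-soundness amounts to proving both implications of every such biconditional in $M$. I would organise the argument as a single case distinction over the shape of the timepoint $T$, mirroring the cases of the definition: the top-level context, the $\mathit{end}$ timepoint, a non-while statement inside a context, the if- and else-branch contexts, the loop-body context, and the loop-condition timepoint $tp_{s'}(it^{s'})$ of a while-statement.

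For the backward implications $\Phi \limp \mathit{Reach}(T)$ I would reuse the reasoning already carried out in the proof of Lemma~\ref{lemma:reach}. The top-level case ($\Phi = true$) and the $\mathit{end}$ case ($\Phi = true$) are immediate: the algorithm initialises $\mathit{curr}=\mathit{start}$ with $I=\{\mathit{start}\}$, and, since $\pv{p}_0$ terminates, the loop of Algorithm~\ref{alg:execution-interpretation} halts with $\mathit{curr}=\mathit{end}$, so $\mathit{start},\mathit{end}\in I$; equivalently, the $\mathit{end}$ case is part~\ref{lemma:reach1} of Lemma~\ref{lemma:reach} applied to $\pv{p}=\pv{p}_0$. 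For the if-branch axiom, assuming $\mathit{Reach}(\mathit{start}_s)$ and $\mathit{Cond}_s(\mathit{start}_s)$, the rule \sos{ite_T} fires when the algorithm processes $\mathit{start}_s$, adding $\mathit{start}_{c_1}$ to $I$; the else-branch (via \sos{ite_F}), the loop body (via \sos{while_T}), and the non-while statement cases follow the same case analysis as in the proof of Lemma~\ref{lemma:reach}.

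The forward implications $\mathit{Reach}(T) \limp \Phi$ are the genuinely new part. Here I would exploit that the annotated operational semantics of Appendix~\ref{sec:sos-semantics} assigns a unique timepoint to every point of the execution, so each non-initial $T\in I$ was added by a single application of a \emph{structurally determined} transition rule; I would make this precise by a structural induction matching Lemma~\ref{lemma:reach}. Reading the premise $\mathit{Reach}(tp_1)$ and the side-condition off that rule then yields $\Phi$: the only rule producing $\mathit{start}_{c_1}$ is \sos{ite_T}, whose premises are precisely $\mathit{Reach}(\mathit{start}_s)$ and $\llbracket \mathit{Cond}\rrbracket(\mathit{start}_s)$, and the else-branch is symmetric with \sos{ite_F}. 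For the loop-condition axiom I would argue that $tp_{s'}(it^{s'})$ is reached either as the loop entry (iteration $\zero$) or through \sos{while_T} from the previous iteration, and in all cases $it^{s'}\leq \mathit{lastIt}_{s'}$, since the algorithm sets $\mathit{lastIt}_{s'}$ to the iteration at which \sos{while_F} fires and never revisits the loop afterwards. For the loop-body axiom, reaching the body at iteration $it^{s'}$ forces \sos{while_T}, hence $\llbracket \mathit{Cond}\rrbracket(tp_{s'}(it^{s'}))$ holds, which by the definition of $\mathit{lastIt}_{s'}$ as the \emph{first} iteration with failing condition gives the strict bound $it^{s'}<\mathit{lastIt}_{s'}$.

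I expect the main obstacle to be precisely this coupling between reachability and the loop-iteration bounds: justifying $it^{s'}\leq \mathit{lastIt}_{s'}$ and $it^{s'}<\mathit{lastIt}_{s'}$ from membership in $I$ requires the uniqueness of timepoints together with the deterministic, monotone progression of the algorithm through successive iterations, and relies on the termination hypothesis to guarantee that $\mathit{lastIt}_{s'}$ is well-defined and finite. Everything else reduces to reading the premise and side-condition off a single, uniquely determined transition rule.
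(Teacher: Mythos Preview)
Your backward direction ($\Phi \limp \mathit{Reach}(T)$) is exactly what the paper does: the paper's proof is a case distinction over the shape of the clause, assuming the right-hand side $\Phi$ (e.g.\ $\mathit{Reach}(\mathit{start}_s)\land\mathit{Cond}_s(\mathit{start}_s)$ for the if-branch) and deriving $\mathit{Reach}(T)$ either directly from the matching SOS rule (\sos{init}, \sos{ite_T}, \sos{ite_F}) or via the appropriate case of Lemma~\ref{lemma:reach} (cases \ref{lemma:reach-context}, \ref{lemma:reach2}\ref{lemma:reach2a}, \ref{lemma:reach2}\ref{lemma:reach2b}) for the statement-in-context, while-condition, and loop-body clauses, and finally chains Lemma~\ref{lemma:reach} cases \ref{lemma:reach-context} and \ref{lemma:reach1} for $\mathit{Reach}(\mathit{end})$. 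Your plan to ``reuse the reasoning of Lemma~\ref{lemma:reach}'' and invoke the individual SOS rules case by case is the same argument.

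Where you diverge is that you treat the defining clauses as genuine biconditionals and also argue the forward implications $\mathit{Reach}(T)\limp\Phi$, whereas the paper's proof establishes \emph{only} the backward direction and is silent on the converse. Your forward argument---each reached timepoint is inserted into $I$ by a structurally unique transition rule, so one reads off its premise and side-condition to recover $\Phi$; the loop-iteration bounds $it^{s'}\leq\mathit{lastIt}_{s'}$ and $it^{s'}<\mathit{lastIt}_{s'}$ follow because the algorithm records $\mathit{lastIt}_{s'}$ precisely when \sos{while_F} fires and never revisits the loop---is sound in outline and does require, as you note, the uniqueness of annotated timepoints together with termination. So your proposal strictly extends the paper's proof: the shared part coincides, and the extra direction you supply is work the paper simply does not carry out. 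Whether that extra direction is formally needed depends on whether one reads the $:=$ clauses as equivalences or merely as sufficient conditions; the paper's own proof treats them as the latter.
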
 
\begin{proof}
	Let $M$ be an execution interpretation. 
	
	First, let \pv{c} be a context. We apply case distinction.
	\begin{itemize}
		\item Assume \pv{c} is the top-level context. From \sos{init} we conclude that $\mathit{Reach}(\mathit{start}_c)$ holds in $M$.
		\item Assume \pv{c} is the context of an if-branch of an \ifThenElseStatement-statement \pv{s} and assume that both $\sigma \mathit{Reach}(\mathit{start}_s)$ and $\sigma \mathit{Cond}_s(\mathit{start}_s)$ hold in $M$ for some grounding $\sigma$ of the enclosing iterations of $s$. Then rule \sos{ite_T} applies, from which we conclude that $\sigma\mathit{Reach}(\mathit{start}_c)$ holds in $M$.
		\item Assume \pv{c} is the context of an if-branch of an \ifThenElseStatement-statement \pv{s}. Analogously to the previous case.
		\item Assume \pv{c} is the context of the body of a
                  \whileStatement-statement \pv{s}, and assume that
                  $\sigma\mathit{Reach}(\mathit{start}_s)$ and
                  $\sigma'it^s < \sigma\mathit{lastIt}_{s}$ hold in
                  $M$ for some grounding $\sigma$ of the enclosing
                  iterations of $s$ and some grounding $\sigma'$ of
                  $it^s$. Using Lemma~\ref{lemma:reach} (case
                  \ref{lemma:reach2}.\ref{lemma:reach2b}) and the fact $\sigma'it^w < \sigma\mathit{lastIt}_{s'}$ we conclude that $\sigma\sigma'\mathit{Reach}(\mathit{start}_{c})$ holds in $M$.
	\end{itemize}
	
	Second, let \pv{s} be a non-\whileStatement-statement occurring
        in context \pv{c}. Assume further that
        $\sigma\mathit{Reach}(\mathit{start}_c)$ holds for some
        grounding $\sigma$ of the enclosing iterations of $c$. Using
        Lemma~\ref{lemma:reach} (case \ref{lemma:reach-context}), we conclude that $\sigma\mathit{Reach}(\mathit{start}_s)$ holds in $M$.

	Third, let \pv{s} be a \whileStatement-statement occurring in
        context \pv{c}. Assume further that both
        $\sigma\mathit{Reach}(\mathit{start}_c)$ and $\sigma' it^s
        \leq \sigma \mathit{lastIt}_s$ hold in $M$ for some grounding
        $\sigma$ of the enclosing iterations of $c$ and some grounding
        $\sigma$ of $it^s$. Using Lemma~\ref{lemma:reach} (case \ref{lemma:reach-context}) we conclude that $\sigma\sigma'\mathit{Reach}(\mathit{start}_s)$ holds in $M$.

	Finally, consider the last statement \pv{s} of the top-level
        context \pv{c}. From \sos{init} we conclude that
        $\mathit{Reach}(\mathit{start}_c)$ holds in $M$. From this, we
        conclude $\mathit{Reach}(\mathit{start}_s)$ using
        Lemma~\ref{lemma:reach} (case
        \ref{lemma:reach-context}). Finally, we apply
        Lemma~\ref{lemma:reach} (case \ref{lemma:reach1}) to conclude
        $\mathit{Reach}(\mathit{end}_s)$, which is the same as
        $\mathit{Reach}(\mathit{end}).$
        \qed
\end{proof}

We will now show that the axiomatic semantics of trace logic are $\whilelang{}$-sound.

\begin{theorem}[$\whilelang$-Soundness of Axiomatic Semantics of $\whilelang$]
	For a given terminating program $\pv{p\pvi{0}}$, the semantics $\llbracket \pv{p\pvi{0}} \rrbracket$ is $\whilelang$-sound.
\end{theorem}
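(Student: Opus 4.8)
The plan is to fix an arbitrary execution interpretation $M$ of $\pv{p\pvi{0}}$ and establish $M \vDash \llbracket \pv{p\pvi{0}} \rrbracket$, which is exactly what $\whilelang$-soundness demands. Since $\llbracket \pv{p\pvi{0}} \rrbracket$ is the conjunction, over all statements $\pv{s}$ of $\pv{p\pvi{0}}$, of the formulas $\forall \mathit{enclIts}.\,\big(\mathit{Reach}(\mathit{start}_s) \limp \llbracket \pv{s} \rrbracket\big)$, I would reduce the goal to a single statement: fix $\pv{s}$ and an arbitrary grounding $\sigma$ of its enclosing iterations, assume $M \vDash \sigma\mathit{Reach}(\mathit{start}_s)$, and derive $M \vDash \sigma\llbracket \pv{s} \rrbracket$. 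The bridge from this semantic assumption back to the construction of $M$ is Definition~\ref{def:execution-interpretation}: the predicate $\mathit{Reach}$ is interpreted in $M$ as precisely the set of timepoints visited by Algorithm~\ref{alg:execution-interpretation}, so $\sigma\mathit{Reach}(\mathit{start}_s)$ tells us that the construction genuinely reached $\sigma\mathit{start}_s$, and there applied the operational rule(s) of Figure~\ref{fig:operational-semantics} attached to $\pv{s}$, extending $M$ so as to satisfy the non-$\mathit{Reach}$ conjunct $G$ of the fired rule's conclusion.

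From here I would run a case analysis on the shape of $\pv{s}$, in each case matching $\sigma\llbracket \pv{s} \rrbracket$ against $G$. The three non-branching cases are routine. If $\pv{s}$ is a \skipStatement{} statement, an integer assignment, or an array assignment, then a single rule (\sos{skip}, \sos{asg}, or \sos{asg_{arr}}) fires, and its conclusion $G$ coincides with the body of \eqref{semantics_skip}, \eqref{semantics_int_assign}, respectively \eqref{semantics_arr_assign}, up to the symmetry of the $Eq$-atoms (so that $\mathit{EqAll}$ is insensitive to the order of its timepoints) and the fixed before/after reading of the two timepoint arguments. If $\pv{s}$ is an \ifThenElseStatement-statement, then exactly one of \sos{ite_T}, \sos{ite_F} fires, determined by whether $\sigma\llbracket \pv{Cond} \rrbracket(\mathit{start}_s)$ holds; the fired rule discharges the matching implication among \eqref{semantics_ite_1}--\eqref{semantics_ite_2}, while the other implication holds vacuously since its guard is false.

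The \whileStatement-statement case is where the real work lies and is, I expect, the main obstacle, because $\sigma\llbracket \pv{s} \rrbracket$ is now the conjunction \eqref{semantics_while_1}--\eqref{semantics_while_3} and each conjunct quantifies over all iterations up to $\mathit{lastIt}_s$, so no single rule application suffices. The facts I would assemble are: (i) Algorithm~\ref{alg:execution-interpretation} sets $\sigma\mathit{lastIt}_s$ to be exactly the iteration at which \sos{while_F} fires, i.e.\ the first iteration whose loop-condition check fails; (ii) by Lemma~\ref{lemma:reach} (case~\ref{lemma:reach2}.\ref{lemma:reach2a}) every timepoint $\sigma\,tp_s(it^s)$ with $it^s \leq \sigma\mathit{lastIt}_s$ is reached, so each rule application I need actually took place; and (iii) for every iteration $it^s < \sigma\mathit{lastIt}_s$ it must have been \sos{while_T} that fired at $\sigma\,tp_s(it^s)$ (otherwise $\mathit{lastIt}_s$ would be smaller), yielding both the guard $\sigma\llbracket \pv{Cond} \rrbracket(tp_s(it^s))$ and the body-entry equality $\mathit{EqAll}(\mathit{start}_c, tp_s(it^s))$, whereas at $it^s = \sigma\mathit{lastIt}_s$ it was \sos{while_F}, yielding $\neg\sigma\llbracket \pv{Cond} \rrbracket(tp_s(\mathit{lastIt}_s))$ and the loop-exit equality $\mathit{EqAll}(\mathit{end}_s, tp_s(\mathit{lastIt}_s))$. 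Conjoining these gives the two condition facts of \eqref{semantics_while_1}, the family of body equalities \eqref{semantics_while_2}, and the exit equality \eqref{semantics_while_3}, hence $\sigma\llbracket \pv{s} \rrbracket$. The delicate points throughout are reading $\mathit{lastIt}_s$ off the side-effect of Algorithm~\ref{alg:execution-interpretation} on $J$, invoking Lemma~\ref{lemma:reach} uniformly over all intermediate iterations, and threading the grounding $\sigma$ correctly through the nested $tp$-, $\mathit{start}$- and $\mathit{end}$-expressions of the loop body.
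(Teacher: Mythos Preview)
Your proposal is correct and follows essentially the same route as the paper's proof: fix an execution interpretation, reduce to a single statement under a grounding $\sigma$ with $\sigma\mathit{Reach}(\mathit{start}_s)$ assumed, and then case-split on the form of $\pv{s}$, invoking Lemma~\ref{lemma:reach} (case~\ref{lemma:reach2}.\ref{lemma:reach2a}) for the while case to reach every loop-head timepoint and read off which of \sos{while_T}/\sos{while_F} fired. Your treatment of the \ifThenElseStatement{} case (one implication discharged by the fired rule, the other vacuous) and your explicit mention of the symmetry of $\mathit{EqAll}$ are minor presentational refinements, not a different argument.
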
 

\begin{proof}
	Let $M$ be an execution interpretation of $\pv{p\pvi{0}}$. We have to show that for each statement \pv{s} of \pv{p\pvi{0}}, the formula 
	$$\forall \mathit{enclIts}. \big(\mathit{Reach}(\mathit{start}_s) \limp \llbracket s \rrbracket\big)$$
	holds in $M$.
	Let \pv{s} now be an arbitrary statement of $\pv{p\pvi{0}}$. For an arbitrary grounding $\sigma$ of the enclosing iterations assume that $\sigma \mathit{Reach}(\mathit{start}_s)$ holds in $M$. 
	In order to show that $\sigma\llbracket s \rrbracket$ holds in $M$, 
	we perform a case distinction on the type of the statement \pv{s}:
	\begin{itemize}
		\item Let \pv{s} be \while{skip}. Then $\sigma \mathit{Reach}(\mathit{start}_s)$ has been derived using \sos{skip}, so $\sigma \mathit{EqAll}(\mathit{start}_{s}, \mathit{end}_{s})$ holds in $M$, which is the same as $\sigma \llbracket s \rrbracket$.
	
		\item Let \pv{s} be \while{v = e}. Then $\sigma \mathit{Reach}(\mathit{start}_s)$ has been derived using \sos{asg}, so $\sigma \mathit{Update}(v,e,\mathit{start}_{s}, \mathit{end}_{s})$ holds in $M$, which is the same as $\sigma \llbracket s \rrbracket$.
		\item Let \pv{s} be \while{a[e$_1$] = e$_2$}. Then $\sigma \mathit{Reach}(\mathit{start}_s)$ has been derived using \sos{asg_{arr}}, so $\sigma \mathit{UpdateArr}(v,e_1,e_2,\mathit{start}_{s}, \mathit{end}_{s})$ holds in $M$, which is the same as $\sigma \llbracket s \rrbracket$.
	
		\item Let \pv{s} be \while{if(Cond)\{c$_1$\}else\{c$_2$\}}. Assume that $\sigma\llbracket \mathit{Cond} \rrbracket (\mathit{start}_{s})$ holds in $M$. Using \sos{ite_T}, we conclude that $\sigma\mathit{EqAll}(\mathit{start}_{c_1},tp_\pv{s})$ holds in $M$. In particular, formula~\eqref{semantics_ite_1} holds.
		
		Analogously we are able to prove that formula~\eqref{semantics_ite_2} holds in $M$. Combining both results, we conclude that $\sigma \llbracket s \rrbracket$ holds in $M$.
	
		\item Let $s$ be \while{while(Cond)\{p$_1$\}}.
		Formula~\eqref{semantics_while_1} defines $\sigma\mathit{lastIt}_s$ as the smallest iteration $it$ where $\sigma \llbracket \mathit{Cond} \rrbracket (tp_s(it)$ does not hold in $M$. Since we assume termination, such an iteration needs to exist, and in particular the definition is well-defined, so~\eqref{semantics_while_1} holds in $M$.
	
		Now let $it$ be an arbitrary iteration such that
                $it<\sigma\mathit{lastIt}_s$ holds in $M$. Using
                Lemma~\ref{lemma:reach} (case \ref{lemma:reach2}.\ref{lemma:reach2a}), we conclude that $\sigma \mathit{Reach}(tp_s(it))$ holds in $M$ from the fact that $\sigma \mathit{Reach}(\mathit{start}_s)$ holds in $M$.
		Since $\sigma\llbracket \mathit{Cond} \rrbracket (tp_s(it))$ holds in $M$ by the assumption $it<\sigma\mathit{lastIt}_s$, we know that $\sigma \mathit{Reach}(\mathit{start}_s)$ has been derived using \sos{while_T}, and in particular that $\sigma\mathit{EqAll}(\mathit{start}_c,tp_s(it)$ holds in $M$, which is the same as axiom~\eqref{semantics_while_2}.
		
		Finally, we obtain that $\sigma
                \mathit{Reach}(tp_s(\mathit{lastIt}_s))$ holds in $M$
                from the fact that $\sigma
                \mathit{Reach}(\mathit{start}_s)$ holds in $M$ using
                Lemma~\ref{lemma:reach}
                (case\ref{lemma:reach2}.\ref{lemma:reach2a}). 
		By definition of $\mathit{lastIt}_s$, the formula $\sigma \llbracket \mathit{Cond} \rrbracket (tp_s(\mathit{lastIt}_s))$ does not hold in $M$, so $\sigma \mathit{Reach}(\mathit{start}_s)$ has been derived using \sos{while_F}. In particular, $\sigma\mathit{EqAll}(\mathit{end}_\pv{s}, tp_s(\mathit{lastIt}_\pv{s}))$ holds in $M$, which is the same as~\eqref{semantics_while_3}.
		\qed
	\end{itemize}
\end{proof}

\section{Completeness}
\label{sec:completeness}
We how that trace logic semantics is complete with respect to Hoare logic.
We start by translating Hoare triples to trace logic formulas. 
Recall that a Hoare triple $\{F_1\}\pv{p}\{F_2\}$ denotes that if $F_1$ holds at the beginning of the execution of $\pv{p}$, then $F_2$ holds at the end of the execution of $\pv{p}$.
We write such a fact in trace logic as 
$\trans{F_1}(\mathit{start}_p)\rightarrow \trans{F_2}(\mathit{end}_p)$, 
where the expressions $\trans{F_1}(\mathit{start}_p)$ and $\trans{F_2}(\mathit{end}_p)$ denote the result of adding to each program variable in $F_1$ resp. $F_2$ the timepoint $\mathit{start}_p$ resp. $\mathit{end}_p$ as first argument. 
For example, consider the program $\pv{p\pvi{0}} := \pv{i=i+1}$. We can derive the Hoare triple $\{i\eql 2\}\pv{p\pvi{0}}\{i \eql 3\}$. In a similar way, we are able to derive the trace logic formula
$$
	i(\mathit{start}_p)\eql 2 
\limp  
	i(\mathit{end}_p)\eql 3.
$$
Additionally we have to deal with the technical complication that
Hoare logic overspecifies unreachable subprograms. Consider a program
$\pv{p\pvi{0}}$, containing $\pv{p} := \pv{i=i+1}$ as an
\emph{unreachable subprogram}. As Hoare logic does not take the
context of a subprogram into account, we can again derive a Hoare
triple $\{i\eql 2\}\pv{p}\{i \eql 3\}$, even though $\pv{p}$ is never
executed. In contrast, in trace logic we will only derive the more
precise formula 
$$\forall \mathit{enclIt}.
	\Big(
		\mathit{Reach}(\mathit{start}_p)
	\limp 
		\big(
			i(\mathit{start}_p)\eql 2 
		\limp  
			i(\mathit{end}_p)\eql 3 
		\big)
	\Big),$$
which takes the reachability of the subprogram $\pv{p}$ into account. Note that this difference only occurs for (strict) subprograms, as the start of a program is by definition always reachable.

\begin{definition}
	Let $\pv{p\pvi{0}}$ be a fixed program.
	\begin{itemize}
		\item Let $\trans{}$ be a function which translates any Hoare logic formula $F$ to a trace logic formula $F'$, where $F'$ is obtained by adding to each symbol $v$ denoting a program variable in $F$ as first argument the free variable $tp_{\tpsort}$. For any background theory $\mathcal{T}$, let further $\trans{\mathcal{T}} := \{\forall tp_{\tpsort}. \trans{F} \mid F \in \mathcal{T}\}$ be the translation of $\mathcal{T}$.
		\item Trace logic is called \emph{}{complete with respect to Hoare logic}, if for any fixed background theory $\mathcal{T}$ and for any Hoare triple $\{F_1\}p\{F_2\}$ provable using $\mathcal{T}$, the trace logic formula 
		$$\forall \mathit{enclIts}. \Big(\mathit{Reach}(\mathit{start}_p) \rightarrow \big(\trans{F_1}(\mathit{start}_p)\rightarrow \trans{F_2}(\mathit{end}_p)\big)\Big)$$
		is \emph{provable} from the trace logic axioms using the background theory $\trans{\mathcal{T}}$.
	\end{itemize}	
\end{definition}

We are now able to establish the completeness of trace logic with respect to Hoare logic.
\begin{theorem}
	Let $\pv{p\pvi{0}}$ be a fixed program. Then the trace logic semantics is complete with respect to Hoare logic.
\end{theorem}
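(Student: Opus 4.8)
The plan is to argue by structural induction on the derivation of the Hoare triple $\{F_1\}\pv{p}\{F_2\}$ in the standard Hoare calculus over the background theory $\mathcal{T}$, establishing for each inference rule that the translated formula~\eqref{eq:translation-hoare} is derivable from the axiomatic semantics $\llbracket \pv{p\pvi{0}} \rrbracket$, the $\mathit{Reach}$-axioms, the translated theory $\trans{\mathcal{T}}$, and the bounded induction schema~\eqref{eq:tr:ind}. In every case I fix an arbitrary grounding $\sigma$ of the enclosing iterations $\mathit{enclIts}$ of $\pv{p}$, assume the guard $\sigma\mathit{Reach}(\mathit{start}_p)$ and the precondition $\trans{F_1}(\mathit{start}_p)$, and work towards $\trans{F_2}(\mathit{end}_p)$.

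For the atomic rules --- $\skipStatement$, integer assignment, and array assignment --- the conclusion is immediate from the matching semantic axioms~\eqref{semantics_skip}, \eqref{semantics_int_assign}, \eqref{semantics_arr_assign}, which relate the values of program variables at $\mathit{start}_p$ and $\mathit{end}_p$; together with the syntactic substitution performed by the corresponding Hoare axiom this transports $\trans{F_1}$ to $\trans{F_2}$. The rule of consequence reduces to the observation that a $\mathcal{T}$-valid entailment $F \Rightarrow G$ over program variables translates, after decorating each variable with one fixed timepoint $tp_{\tpsort}$, into a $\trans{\mathcal{T}}$-consequence; instantiating $tp_{\tpsort}$ with $\mathit{start}_p$ respectively $\mathit{end}_p$ discharges the two premises of the rule. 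For sequencing $\pv{p};\pv{q}$ I would use the definitional identity $\mathit{end}_p = \mathit{start}_q$ for consecutive statements in a context and propagate reachability from $\mathit{start}_p$ to $\mathit{start}_q$ by Lemma~\ref{lemma:reach}, so that the two induction hypotheses compose across the shared timepoint. The conditional is analogous: after a case split on $\llbracket \pv{Cond} \rrbracket(\mathit{start}_p)$, the $\mathit{Reach}$-axiom for the taken branch supplies reachability of its context, the $\ifThenElseStatement$-axioms~\eqref{semantics_ite_1}--\eqref{semantics_ite_2} carry $\trans{F_1 \wedge \pv{Cond}}$ (resp.\ its negation) to the branch start-timepoint, and the relevant induction hypothesis then yields $\trans{F_2}$ at $\mathit{end}_p$.

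The heart of the proof, and the step I expect to be the main obstacle, is the $\whileStatement$-rule, which from $\{I \wedge \pv{Cond}\}\,\pv{c}\,\{I\}$ derives $\{I\}\,\pv{w}\,\{I \wedge \neg\pv{Cond}\}$ for the loop $\pv{w}$ with body $\pv{c}$: Hoare logic hides an induction over the number of loop iterations inside this single premise, and trace logic must make it explicit. I would instantiate the bounded induction schema~\eqref{eq:tr:ind} with $P(it) := \trans{I}(tp_\pv{w}(it))$ and bounds $\zero$ and $\mathit{lastIt}_\pv{w}$. The base case $P(\zero)$ is exactly $\trans{I}(\mathit{start}_\pv{w}) = \trans{F_1}(\mathit{start}_p)$, since $\mathit{start}_\pv{w} = tp_\pv{w}(\zero)$. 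For the inductive step at an iteration $it < \mathit{lastIt}_\pv{w}$, formula~\eqref{semantics_while_1} forces $\llbracket \pv{Cond} \rrbracket(tp_\pv{w}(it))$ to hold, \eqref{semantics_while_2} copies $\trans{I \wedge \pv{Cond}}$ to $\mathit{start}_\pv{c}$, reachability of $\mathit{start}_\pv{c}$ is obtained from Lemma~\ref{lemma:reach} (case~\ref{lemma:reach2}.\ref{lemma:reach2b}), and the induction hypothesis for $\{I \wedge \pv{Cond}\}\,\pv{c}\,\{I\}$ yields $\trans{I}(\mathit{end}_\pv{c})$; as $\mathit{end}_\pv{c} = tp_\pv{w}(\suc(it))$ this is precisely $P(\suc(it))$.

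Concluding the while case, bounded induction delivers $\trans{I}(tp_\pv{w}(\mathit{lastIt}_\pv{w}))$, which combined with $\neg\llbracket \pv{Cond} \rrbracket(tp_\pv{w}(\mathit{lastIt}_\pv{w}))$ from~\eqref{semantics_while_1} and the value-equality~\eqref{semantics_while_3} between $\mathit{end}_\pv{w}$ and $tp_\pv{w}(\mathit{lastIt}_\pv{w})$ gives $\trans{I \wedge \neg\pv{Cond}}(\mathit{end}_p) = \trans{F_2}(\mathit{end}_p)$. The two bookkeeping points I anticipate needing most care are (a) threading the $\mathit{Reach}$-guards correctly, since every translated sub-triple is conditioned on reachability and the induction hypotheses are only applicable once reachability of the relevant sub-timepoint has been re-established via Lemma~\ref{lemma:reach}, and (b) repeatedly invoking the definitional identities of Section~\ref{sec:tp} that connect the $\mathit{end}$-timepoint of a subprogram with the $\mathit{start}$/$tp$/$\mathit{lastIt}$ expressions of its context, which is exactly where the non-recursive timepoint encoding pays off.
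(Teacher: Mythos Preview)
Your proposal is correct and follows essentially the same route as the paper's proof: a structural induction on the Hoare derivation, with the atomic rules discharged by the corresponding semantic axioms, consequence handled via the translation of $\mathcal{T}$-valid implications, sequencing and branching threaded through the $\mathit{Reach}$-axioms (what you cite as Lemma~\ref{lemma:reach}), and the \whileStatement{}-rule handled by instantiating the bounded induction schema with $P(it):=\trans{I}(tp_\pv{w}(it))$ on the interval $[\zero,\mathit{lastIt}_\pv{w}]$. The bookkeeping points you flag---re-establishing reachability before each appeal to an inner induction hypothesis, and using the definitional identities among $\mathit{start}$, $\mathit{end}$, $tp$, and $\mathit{lastIt}$---are exactly the places where the paper's argument does its work as well.
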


\begin{proof}
	Let $\pv{p\pvi{0}}$ be a fixed terminating program. We proceed by structural induction on the Hoare calculus derivation with the induction hypothesis that for any subprogram $\pv{p}$ of $\pv{p\pvi{0}}$ and for any formulas $F_1,F_2$, if 
	$\{F_1\}p\{F_2\}$ is derivable in the Hoare calculus, then 
	$$
	\forall \mathit{enclIts}. 
	\Big(
		\mathit{Reach}(\mathit{start}_p) 
	\rightarrow 
		\big(
			\trans{F_1}(\mathit{start}_p)
		\rightarrow 
			\trans{F_2}(\mathit{end}_p)
		\big)
	\Big)
	$$ 
	is entailed by the trace logic semantics. 

	Consider now an arbitrary subprogram $\pv{p}$ of
        $\pv{p\pvi{0}}$ such that $\{F_1\}p\{F_2\}$ is derivable in
        Hoare logic.
        For an arbitrary grounding $\sigma$ of the enclosing iterations assume that $\sigma\mathit{Reach}(\mathit{start}_{p})$ holds. This fact together with the definition of the trace logic semantics implies that $\sigma\llbracket p \rrbracket$ holds. We now use a case distinction to show the implication 
	\begin{equation}
		\sigma \trans{F_1}(\mathit{start}_{p}) \limp \sigma \trans{F_2}(\mathit{end}_{p}).\label{proof:completeness-formula1}
	\end{equation}
	Since the grounding $\sigma$ is arbitrary, this then concludes the proof.
	\begin{itemize}
		\item Skip: Assume the last rule is 
		\begin{prooftree}
			\AxiomC{}
			\UnaryInfC{$\{F_1\} skip \{F_1\}$}
		\end{prooftree}		
		We have to show $\sigma \trans{F_1}(\mathit{start}_{p}) \limp \sigma \trans{F_1}(\mathit{end}_{p})$.
		The semantics $\sigma\llbracket p \rrbracket$ state that $\sigma EqAll(\mathit{start}_p, \mathit{end}_p)$ holds.
		Using this formula, we can rewrite $\sigma \trans{F_1}(\mathit{start}_{p})$ into $\sigma \trans{F_1}(\mathit{end}_{p})$, which shows that~\eqref{proof:completeness-formula1} holds.
		\item Assignment: Assume that the last rule is 
		\begin{prooftree}
			\AxiomC{}
			\UnaryInfC{$\{F_2[x \mapsto e]\} x:=e \{F_2\}$}
		\end{prooftree}	
		We have to show the implication $\sigma \trans{F_2}[x~\mapsto~e](\mathit{start}_p) \limp \sigma \trans{F_2}(\mathit{end}_p)$.
		By definition, $\sigma \llbracket p\rrbracket$ consists of $\sigma(x(\mathit{end}_p))=\sigma(\llbracket e \rrbracket(\mathit{start}_p))$ and of $\sigma(v(\mathit{end}_p))=\sigma(v(\mathit{start}_p))$ for all other variables $v$. Using these equations, we rewrite $\sigma \trans{F_2}[x~\mapsto~e](\mathit{start}_p)$ into $\sigma \trans{F_2}(\mathit{end}_p)$, which proves~\eqref{proof:completeness-formula1}.
		\item Weakening: Assume the last rule is
		\begin{prooftree}
			\AxiomC{$F_1 \limp F_1'$}
			\AxiomC{$\{F_1'\} p \{F_2'\}$}
			\AxiomC{$F_2' \limp F_2$}
			\TrinaryInfC{$\{F_1\} p \{F_2\}$}
		\end{prooftree}
		First, the formulas $F_1 \limp F_1'$ and $F_2' \limp F_2$ are tautologies in Hoare Logic. Since we assume that $\trans$ maps Hoare Logic tautologies to Trace Logic tautologies, we get that $\trans{F_1 \limp F_1'}(tp)$ and $\trans{F_2' \limp F_2}(tp)$ hold for arbitrary ground timepoints $tp$. In particular, $\trans{F_1 \limp F_1'}(\sigma(\mathit{start}_p))$ and $\trans{F_2' \limp F_2}(\sigma(\mathit{end}_p))$ hold, which can be written as $\sigma \trans{F_1}(\mathit{start}_p) \limp \sigma \trans{F_1'}(\mathit{start}_p)$ and $\sigma \trans{F_2'}(\mathit{end}_p) \limp \sigma \trans{F_2}(\mathit{end}_p)$. Second, we use the induction hypothesis and the assumption $\sigma \mathit{Reach}(\mathit{start}_p)$ to conclude that the trace logic axioms imply $\sigma \trans{F_1'}(\mathit{start}_p) \limp \sigma \trans{F_2'}(\mathit{end}_p)$. 
		Combining the three implications shows that~\eqref{proof:completeness-formula1} holds.
		\item Concatenation: Assume the last rule is
		\begin{prooftree}
			\AxiomC{$\{G_1\} p_1 \{G_1'\}$}
			\AxiomC{$\dots$}
			\AxiomC{$\{G_k\} p_k \{G_k'\}$}
			\TrinaryInfC{$\{G_1\} p_1;\dots;p_k \{G_k'\}$}
		\end{prooftree}
		where $G_1=F_1$ and $G_k'=F_2$.
		Using Lemma~\ref{lemma:reach} (case \ref{lemma:reach-context}), we conclude from $\sigma\mathit{Reach}(\mathit{start}_{p})$ that $\sigma\mathit{Reach}(\mathit{start}_{p_i})$ holds for any $1 \leq i \leq k$. Combining these facts with applications of the induction hypothesis yields that $\sigma \trans{G_i}(\mathit{start}_{p_i}) \limp \sigma \trans{G_i'}(\mathit{end}_{p_i})$ holds for any $1 \leq i \leq k$. Since $\sigma(\mathit{end}_{p_i}) = \sigma(\mathit{start}_{p_{i+1}})$ for any $1 \leq i < k$, we use a trivial induction to conclude 
		$$\sigma \trans{G_1}(\mathit{start}_{p}) \limp \sigma \trans{G_k'}(\mathit{end}_{p}).$$
		In particular, since $G_1=F_1$ and $G_k'=F_2$, we conclude that~\eqref{proof:completeness-formula1} holds.
		\item If-then-else conditionals: Assume that the last rule is 
 		\begin{prooftree}
			\AxiomC{$\{\llbracket \mathit{Cond} \rrbracket \land F_1\} p_1 \{F_2\}$}
			\AxiomC{$\{\neg \llbracket \mathit{Cond} \rrbracket \land F_1\} p_2 \{F_2\}$}
			\BinaryInfC{$\{F_1\}$ \while{if(Cond)\{p$_1$\}else\{p$_2$\}}$\{F_2\}$}
		\end{prooftree}
		W.l.o.g. assume that $\sigma \llbracket \mathit{Cond} \rrbracket(\mathit{start}_{p})$ holds. We assume that $\sigma \trans{F_1}(\mathit{start}_{p})$ holds with the goal of deriving $\sigma \trans{F_2}(\mathit{end}_{p})$.
		First, we combine $\sigma\llbracket p \rrbracket$ with $\sigma \llbracket \mathit{Cond} \rrbracket (\mathit{start}_{p})$ to derive $\sigma EqAll(\mathit{start}_{p},\mathit{start}_{p_1})$. From this we derive $\sigma \llbracket \mathit{Cond} \rrbracket(\mathit{start}_{p_1})$ and $\sigma \trans{F_1}(\mathit{start}_{p_1})$. 
		Second $\sigma \mathit{Reach}(\mathit{start}_{p})$ and $\sigma \llbracket \mathit{Cond} \rrbracket(\mathit{start}_{p})$ imply $\sigma \mathit{Reach}(\mathit{start}_{p_1})$. 
		We then combine the induction hypothesis 
		\begin{equation*}
			\begin{array}{l}
				\sigma \mathit{Reach}(\mathit{start}_{p_1}) \limp \\
					\qquad \Big( \sigma \big(\llbracket \mathit{Cond} \rrbracket \land \trans{F_1} \big)(\mathit{start}_{p_1}) \limp \sigma \trans{F_2}(\mathit{end}_{p_1})\Big)
			\end{array}
		\end{equation*}
		with $\sigma \mathit{Reach}(\mathit{start}_{p_1})$, $\sigma \llbracket \mathit{Cond} \rrbracket (\mathit{start}_{p_1})$ and $\sigma \trans{F_1} (\mathit{start}_{p_1})$ to obtain $\sigma \trans{F_2}(\mathit{end}_{p_1})$.
		Since $\mathit{end}_{p_1}=\mathit{end}_{p}$, we
                conclude $\sigma \trans{F_2}(\mathit{end}_{p})$, which
                proves~\eqref{proof:completeness-formula1}.
                
		\item While-statement: Assume that the last rule is
		\begin{prooftree}
			\AxiomC{$\{\mathit{Cond} \land F\} p_1 \{F\}$}
			\UnaryInfC{$\{F\} $\while{while(Cond)\{p$_1$\}}$\{\neg \mathit{Cond} \land F\}$}
		\end{prooftree}
		We again assume that $\sigma \trans{F_1}(\mathit{start}_{p})$ holds with the goal of deriving $\sigma \trans{F_1}(\mathit{end}_{p})$.

		We perform a bounded induction on $it^p$ from $0$ to $\sigma\mathit{lastIt}_p$ with the induction hypothesis $\sigma \trans{F_1}(tp_p(it^p))$.

		Base Case: The formula $\sigma \trans{F_1}(\mathit{start}_{p})$ holds and can be written as $\sigma \trans{F_1}(tp_{p}(\zero))$.
		
		Inductive Case: We have to show the implication 
		\begin{equation*}
			\begin{array}{l}
				\sigma \forall it^p. 
				\Big( 
					\big( 
						it^p < \mathit{lastIt}_p 
					\land 
						\trans{F_1}(tp_p(it^p))
					\big) \\ \qquad
				\limp 
					\trans{F_1}(tp_p(\suc(it^p)))
				\Big).
			\end{array}
		\end{equation*}
		
		Let $\sigma'$ be an extension of $\sigma$ with an arbitrary grounding of $it^p$, and assume that $\sigma'(it^p < \mathit{lastIt}_p )$ and $\sigma' \trans{F_1}(tp_p(it^p))$ hold. We now have to show $\sigma' \trans{F_1}(tp_p(\suc(it^p)))$. 
		Combining $\sigma'\llbracket p \rrbracket$ and $\sigma'(it^p < \mathit{lastIt}_p )$ yields both $\sigma'\llbracket \mathit{Cond} \rrbracket (tp_p(it^p))$ and $\sigma'\mathit{EqAll}(\mathit{start}_{p_1},tp_p(it^p)$. We use the latter fact first to rewrite the former fact to $\sigma'\llbracket \mathit{Cond} \rrbracket (\mathit{start}_{p_1})$ and second to rewrite $\sigma' \trans{F_1}(tp_p(it^p))$ to  $\sigma' \trans{F_1}(\mathit{start}_{p_1})$.
		Third, we obtain $\sigma'
                \mathit{Reach}(\mathit{start}_{p_1})$ using
                Lemma~\ref{lemma:reach} (case \ref{lemma:reach2}.\ref{lemma:reach2b}).
		
		The induction-hypothesis now states
		\begin{equation}
		\begin{array}{l}
			\forall \mathit{enclIts}. 
			\bigg(
				\mathit{Reach}(\mathit{start}_{p_1}) 
					\rightarrow \\ \quad
				\Big(
					(
					\llbracket \mathit{Cond} \rrbracket
						\land 
						\trans{F_1}
						)(\mathit{start}_{p_1})
					\rightarrow 
						\trans{F_1}(\mathit{end}_{p_1})
				\Big)
			\bigg). 
			\end{array}\hspace*{-2em}
		\end{equation}
		For the grounding $\sigma'$ we have already established the three premises of this formula, therefore we conclude $\sigma'\trans{F_1}(\mathit{end}_{p_1})$. Since $\mathit{end}_{p_1} = tp_p(\suc(it^p))$, we get $\sigma'\trans{F_1}(tp_p(\suc(it^p))$, which concludes the inductive case.

		We now have established the base case and the
                inductive case, so we use bounded induction to
                conclude $\sigma
                \trans{F_1}(tp_p(\mathit{lastIt}_p))$. Finally, we
                rewrite this fact to $\sigma
                \trans{F_1}(\mathit{end_p})$ using $\sigma'\llbracket
                p \rrbracket$, which shows
                that~\eqref{proof:completeness-formula1} holds.
                \qed
	\end{itemize}
\end{proof}

\section{Correctness of trace lemmas}
\label{sec:trace-lemmas}
We already proved soundness of trace lemma (A1) in
Section~\ref{sec:verification}.
In this section, we prove the remaining two trace lemmas (B1-B2).

\begin{proof}[Soundness of Intermediate Value Trace Lemma (B1)]
	We prove the following equivalent formula obtained from the
        intermediate value trace lemma (B1) by modus tollens.
	\begin{equation}
		\label{form:intermediate-modus-tollens}
	\begin{array}{l}
	\hspace*{-1.5em}\forall x_\Int. \bigg(
			\Big(
				\mathit{Dense}_{w,v} \land
				v(tp_\pv{w}(\zero)) \leq x \;\land 
          \\
         \hspace*{-.5em}\forall it_\Nat.\big(
					(
					it < \mathit{lastIt}_\pv{w}
					\land
						v(tp_\pv{w}(\suc(it))) \eql v(tp_\pv{w}(it)) + 1
					) 
          \\
          \hspace*{-.5em}	\limp
					v(tp_\pv{w}(it)) \neql x
					\big)
			\Big)\\
			\limp 
				v(tp_\pv{w}(\mathit{lastIt}_\pv{w})) \leq x
		\bigg)
	\end{array}\hspace*{-2em}
	\end{equation}
	The proof proceeds by deriving the conclusion of formula \eqref{form:intermediate-modus-tollens} from the premises of formula 
	\eqref{form:intermediate-modus-tollens}.
	
	Consider the instance of the induction axiom scheme with
	\begin{subequations}	
	\begin{align}
		\text{{\scriptsize Base case: }}&v(tp_\pv{w}(\zero)) \leq x 
			\label{form:intermediate-axiom-a}\\
		\text{{\scriptsize Inductive case: }}&\forall it_\Nat. 
			\Big(\big(\zero \leq it < \mathit{lastIt}_\pv{w} \land v(tp_\pv{w}(it)) \leq x \big) \label{form:intermediate-axiom-b} 
				\\ &\limp v(tp_\pv{w}(\suc(it))) \leq x \Big)  \nonumber \\
		\hspace*{-1em}\text{{\scriptsize Conclusion: }}&\forall it_\Nat. \Big(\zero \leq it \leq \mathit{lastIt}_\pv{w} \limp v(tp_\pv{w}(it)) \leq x \Big),\hspace*{-2em}	\label{form:intermediate-axiom-c}
	\end{align}
	\end{subequations} 
	obtained from the bounded induction axiom
        scheme~\eqref{eq:tr:ind} with $P(it) := v(tp_\pv{w}(it)) \leq x$. 
	
	The base case~\eqref{form:intermediate-axiom-a} holds, since it occurs as second premise of formula \eqref{form:intermediate-modus-tollens}. For the inductive case \eqref{form:intermediate-axiom-b}, assume $\zero \leq it < \mathit{lastIt}_\pv{w}$ and $v(tp_\pv{w}(it)) \leq x$. By density of $v$, we obtain two cases: 

	\begin{itemize}
		\item Assume $v(tp_\pv{w}(\suc(it)))=v(tp_\pv{w}(it))$. Since we also assume $v(tp_\pv{w}(it)) \leq x$, we immediately get $v(tp_\pv{w}(\suc(it))) \leq x$.
		\item Assume $v(tp_\pv{w}(\suc(it)))=v(tp_\pv{w}(it)) + 1$. From the assumption $it < \mathit{lastIt}_\pv{w}$ and the third premise of formula \ref{form:intermediate-modus-tollens}, we get $v(tp_\pv{w}(it)) \neql x$, which combined with $v(tp_\pv{w}(it)) \leq x$ and the totality-axiom of $<$ for integers gives $v(tp_\pv{w}(it)) < x$. Finally we combine this fact with $v(tp_\pv{w}(\suc(it)))=v(tp_\pv{w}(it)) + 1$ and the integer-theory-lemma $x<y \limp x+1\leq y$ to derive $v(tp_\pv{w}(\suc(it))) \leq x$.
	\end{itemize}
	Hence, we conclude that the inductive case \eqref{form:intermediate-axiom-b} holds.
	Thus, the conclusion \eqref{form:intermediate-axiom-c} also
        holds. Since the theory axiom $\forall it_\Nat. \ \zero \leq
        it$ holds, formula \eqref{form:intermediate-axiom-c} implies
        the conclusion of formula
        \eqref{form:intermediate-modus-tollens}, which concludes the
        proof.
\qed
\end{proof}

\begin{proof}[Soundness of Iteration Injectivity Trace Lemma (B2)]
	For arbitrary but fixed iterations $it^1$ and $it^2$, assume that the premises of the lemma hold.
	Now consider the instance of the induction axiom scheme with
	\begin{subequations}	
	\begin{align}
		\text{{\scriptsize Base case: }}&v(tp_\pv{w}(it^1)) < v(tp_\pv{w}(\suc(it^1))) \label{form:injectivity-axiom-a}\\
	\text{{\scriptsize Inductive case: }}&\forall it_\Nat. \Big(\big(\suc(it^1) \leq it < \mathit{lastIt}_\pv{w}  \nonumber\\
&		\qquad\quad\land v(tp_\pv{w}(it^1)) < v(tp_\pv{w}(it)) \big) \label{form:injectivity-axiom-b}\\
		&\qquad\limp v(tp_\pv{w}(it^1)) < v(tp_\pv{w}(\suc(it))) \Big) \nonumber\\
	\hspace*{-1em}\text{{\scriptsize Conclusion: }}&\forall
                                                         it_\Nat. \Big(\suc(it^1)
                                                         \leq it \leq
                                                         \mathit{lastIt}_\pv{w}
                                                         \limp
                                                         \nonumber\\
          & \qquad\quad v(tp_\pv{w}(it^1)) < v(tp_\pv{w}(it)) \Big),\label{form:injectivity-axiom-c}
	\end{align}
	\end{subequations} 
	obtained from the bounded induction axiom
        scheme~\eqref{eq:tr:ind} with $P(it) := v(tp_\pv{w}(it^1)) <
        v(tp_\pv{w}(it))$, by instantiating $bl$ and $br$ to
        $\suc(it^1)$, respectively $\mathit{lastIt}_\pv{w}$.

	The base case \eqref{form:injectivity-axiom-a} holds since by integer theory we have $\forall x_\Int.\ x<x+1$ and by assumption $v(tp_\pv{w}(\suc(it^1))) = v(tp_\pv{w}(it^1)) + 1$ holds.

	For the inductive case, we assume for arbitrary but fixed $it$ that $v(tp_\pv{w}(it^1)) < v(tp_\pv{w}(it))$ holds. Combined with $\mathit{Dense}_{w,v}$ and $\forall x_\Int. (x<y \limp x<y+1)$ this yields $v(tp_\pv{w}(it^1)) < v(tp_\pv{w}(\suc(it)))$, so~\eqref{form:injectivity-axiom-b} holds.
	Since both premises~\eqref{form:injectivity-axiom-a}
        and~\eqref{form:injectivity-axiom-b} hold, also the
        conclusion~\eqref{form:injectivity-axiom-c} holds. Next,
        $it^1<it^2$ implies $\suc(it^1)\leq it^2$ (using the
        monotonicity of $\suc$). We therefore have $\suc(it^1)\leq
        it^2 < \mathit{lastIt}_\pv{w}$, so we are able to instantiate
        the conclusion\eqref{form:injectivity-axiom-c} to obtain
        $v(tp_\pv{w}(it^1)) < v(tp_\pv{w}(it^2))$. Finally, we use the
        arithmetic property $\forall x_\Int, y_\Int. (x<y \limp x \neql
        y)$ to conclude $v(tp_\pv{w}(it^1)) \neql v(tp_\pv{w}(it^2))$.
        \qed
\end{proof}

\end{document}